\newtheorem{theorem}{Theorem}[section]
\newtheorem{assumption}{Assumption}
\newtheorem{proposition}[theorem]{Proposition}
\newtheorem{corollary}[theorem]{Corollary}
\newtheorem{lemma}[theorem]{Lemma}
\theoremstyle{definition}
\newtheorem{definition}[theorem]{Definition}
\newtheorem{example}[theorem]{Example}
\newcommand{\GL}{\textnormal{GL}}
\newcommand{\F}{\mathbb{F}}
\newcommand{\E}{\mathbb{E}}
\newcommand{\rk}{\textnormal{rk}}
\newcommand{\wt}{\textnormal{wt}}
\newcommand{\Gab}{\textnormal{Gab}}
\newcommand{\crk}{\textnormal{colrk}}
\newcommand{\Gr}{\textnormal{Gr}}
\newcommand{\supp}{\textnormal{supp}}
\title{Extension of Overbeck's Attack for Gabidulin-based Cryptosystems\footnote{This work was supported by SNF grant no.\ 149716.}}
\author[1]{Anna-Lena Horlemann-Trautmann}
\author[2]{Kyle Marshall}
\author[2]{Joachim Rosenthal}
\affil[1]{EPF Lausanne, Switzerland}
\affil[2]{Universit\"{a}t Z\"{u}rich, Switzerland}
\begin{document}
\maketitle

\begin{abstract}
   Cryptosystems based on codes in the rank metric were introduced in 1991 by Gabidulin, Paramanov, and Tretjakov (GPT) and have been studied as a promising alternative to cryptosystems based on codes in the Hamming metric. In particular, it was observed that the combinatorial solution for solving the rank analogy of the syndrome decoding problem appears significantly harder. Early proposals were often made with an underlying Gabidulin code structure. Gibson, in 1995, made a promising attack which was later extended by Overbeck in 2008 to cryptanalyze many of the systems in the literature. Improved systems were then designed to resist the attack of Overbeck and yet continue to use Gabidulin codes. In this paper, we generalize Overbeck's attack to break the GPT cryptosystem for all possible parameter sets, and then generalize the attack to cryptanalyze particular variants which explicitly resist the attack of Overbeck. 
\end{abstract}

\section{Introduction}

Cryptosystems based on the hardness of the general decoding problem have received much attention because of their applications to post-quantum cryptography. The practical implementation of these systems, however, has suffered from the drawback of having a large key size relative to RSA or elliptic curve cryptography (ECC). Regardless, coding based cryptography remains one of the most feasible alternatives to traditional number theoretic cryptosystems for resisting quantum attacks such as Shor's factoring algorithm. A large body of work has been produced in the study of cryptography based on codes in the Hamming metric, starting with McEliece in 1978 \cite{McEcliece1978}. It was observed that the cryptosystem he designed had efficient encryption and decryption procedures, however, the proposed public key sizes were significantly larger than keys for RSA or ECC, rendering the system infeasible in its original form. 

The large size of the key in the McEliece cryptosystem is a consequence of the efficiency of combinatorial solutions to the general decoding problem for codes in the Hamming metric. The impetus for interest in codes in the rank metric were preliminary results concerning the rank syndrome decoding problem, in which the best algorithms were of significantly higher complexity \cite{Chabaud96,Ourivski02}. This indicated that cryptosystems could be designed with far smaller parameters than those in the Hamming metric. Cryptosystems based on codes in the rank metric were introduced earlier by Gabidulin, Paramonov, and Tretjakov (GPT) \cite{Gabidulin91}. Since then, proposals for designs of cryptosystems have alternately been attacked and modified. The designs are often based on Gabidulin codes--the rank metric analogy of generalized Reed-Solomon codes--because of the scarcity of efficiently decodable codes in the rank metric. This has led to efficient structural attacks \cite{Gibson95,Overbeck08} and subsequently improvements in the designs of these codes and their parameters \cite{Kshevetskiy07,Gabidulin08,Rashwan10,Loidreau10}. It should be noted that unlike the syndrome decoding problem in the Hamming metric, the rank syndrome decoding problem is not known to be NP-hard. Other related work has been done in improving algorithms for the rank syndrome decoding problem \cite{Gaborit13}, and also designing rank-metric based cryptosystems which do not rely on Gabidulin codes \cite{Gaborit14}.

The original GPT cryptosystem had its first significant attack by Gibson \cite{Gibson95}. Following Gibson's lead, Overbeck proposed an alternative attack that led to a polynomial time break for many parameters of the GPT cryptosystem \cite{Overbeck08}. In the wake of these developments, two modifications, designed to use Gabidulin codes and yet resist the attack of Overbeck, stand out. They follow a similar idea - a more careful choice of distortion matrix - but have different approaches. The approach taken in \cite{Loidreau10} is based on enlarging the distortion matrix but restricting its rank, whereas the idea in \cite{Rashwan10} is based on careful design of the structure of the distortion matrix. While the ideas in these modifications are not necessarily mutually exclusive, an disadvantage of the former version is that it requires a large increase in the public key size in order to be secure against Overbeck's attack. A disadvantage of the latter version is that the distortion matrices must necessarily be quite structured.

In this paper we present a new attack which can be seen as a generalization of Overbeck's, and which allows us to cryptanalyze the systems presented in \cite{Rashwan10} and \cite{Loidreau10}. The paper is organized as follows: Section \ref{s:background} provides some terminology as well as the necessary background regarding rank metric codes and the GPT cryptosystem and its variants. Section \ref{s:presults} provides some basic results that we will need to describe our attack. In particular, we need some basic results about Moore matrices as well as the behavior of matrices under the coordinate-wise Frobenius map. Section \ref{s:attack} outlines the attack on the GPT cryptosystem and Section \ref{s:ext} uses the method to cryptanalyze the aforementioned variants.

\section{Background}\label{s:background}

Let $\F\subset \E$ be two fields with $[\E \colon \F] = m$. We will refer to the rank in the following ways. Given a matrix $M$ with coefficients in $\E$, we mean by the rank of $M$, the usual notion of the dimension of the row span of $M$ as a vector space over $\E$. We will denote the row span of a matrix $M$ over $\mathbb{E}$ by $\langle M \rangle$. By column rank (over $\F$) of a matrix $M$ with coefficients in $\E$, we mean the rank of the column span of $M$ as an $\F$-vector space and we will denote this by $\crk(M)$. When we say the rank of a vector, $\boldsymbol{x} \in \E^n$, we mean the $\F$-rank of the matrix obtained by expanding $\boldsymbol{x}$ into an $m \times n$ matrix according to some basis of $\E$ over $\F$. The rank defined in this way is invariant with respect to the choice of basis. An equivalent way to express the rank of a vector $\boldsymbol{x} \in \E^n$ is to take the dimension over $\F$ of the subspace of $\E$ which is spanned by the coordinates of $\boldsymbol{x}$.

If we are working over a base field $\F = \F_q$ of cardinality $q$, and an extension field $\E = \F_{q^m}$, then we denote by $[i]$ the $i$th Frobenius power, $q^{i}$. The Frobenius map can be applied to a matrix or vector coordinate-wise. If $M = (M_{a,b})$ is any matrix (or vector) over $\F_{q^m}$, we define $M^{([i])} = (M_{a,b}^{[i]})$. It is easy to verify that $\langle M\rangle^{([i])} = \{ \boldsymbol{x}^{([i])} \mid \boldsymbol{x} \in \langle M\rangle\} = \langle M^{([i])} \rangle$. 

\begin{definition} The \textit{rank distance} between $\boldsymbol{x},\boldsymbol{y} \in \F_{q^m}^n$ is defined to be $$d_{\textnormal{R}}(\boldsymbol{x},\boldsymbol{y}) = \rk (\boldsymbol{x}-\boldsymbol{y}).$$ \end{definition} This defines a metric on $\F_{q^m}^{n}$. If $V$ is a subspace of $\F_{q^m}$, the \textit{minimum rank distance} of $V$ is given by $$d_{\textnormal{R}}^{\textnormal{min}}(V) = \min\{ d_{\textnormal{R}}(\boldsymbol{x}, 0) \mid \boldsymbol{x} \in V \}.$$ We will also use the term \textit{weight} of $\boldsymbol{x}$, denoted by $\wt_{\textnormal{R}}(\boldsymbol{x})$, to mean $d_{\textnormal{R}}(\boldsymbol{x}, 0).$

The Singleton bound for (linear) rank-metric codes is given by the inequality (see e.g.\ \cite{Gabidulin85}) $$d_{\textnormal{R}}^{\textnormal{min}}(V) \leq n - \dim(V) + 1.$$ 

\begin{definition}
A rank-metric code meeting the Singleton bound is called a \emph{maximum rank-distance (MRD) code}.
\end{definition}

The linear isometries of $\F_{q^m}^n$ with respect to the rank metric are given by $(\F_{q^m}^*)\times \GL_n(\F_q)$ \cite{Berger03}.  Throughout the paper we will make extensive use of the coordinate-wise Frobenius map. This map is a semi-linear isometry of the rank-metric with many useful properties. For more information on semi-linear isometries, see \cite{Morrison13}.

\begin{definition}
	A matrix $M \in \F_{q^m}^{k\times n}$ is called a \emph{Moore matrix} if there exists a $\boldsymbol{\alpha} \in \F_{q^m}^{n}$ such that row $i$ of $M$ is equal to $\boldsymbol{\alpha}^{([i-1])}$ for $i= 1,  \dots , k$. $\boldsymbol{\alpha}$ is called the generator of $M$.
\end{definition}

In the following lemma, we will summarize some of the important properties of Moore matrices. These results are known or are direct consequences of known results.

\begin{lemma}\label{l:moore} 
Fix $k \leq N$, and let $M \in \F_{q^m}^{k\times N}$ be a Moore matrix with generator $\boldsymbol{\alpha}$, with $\rk(\boldsymbol{\alpha}) = n\leq N$. 

\begin{enumerate}
	\item If $k \leq n$, then $\langle M \rangle$ has dimension $k$, and minimum rank distance $n-k+1$.
	\item If $k<n$, then $\dim(\langle M \rangle \cap \langle M \rangle^{(q)})= k-1$ and $\dim(\langle M\rangle + \langle M\rangle^{(q)} )= k+1.$
	\item If $A \in \F_{q^m}^{k\times N}$ is another Moore matrix then $M+A$ is also a Moore matrix. Moreover, if the column rank of $A$ is equal to $r<n-k+1$, then the minimum rank distance of $\langle M+A\rangle$ is at least $n-k+1-r$.
	\item 
	If the minimum rank distance of $\langle M\rangle$ is $d>1$, then the minimum rank distance of $\langle M\rangle + \langle M^{([1])}\rangle $ is equal to $d-1$.
	\item 
	If the minimum rank distance of $\langle M\rangle$ is $d>1$, and $E \in \F_q^{N \times (N-s)}$ is a full rank matrix, then $ME$ is a Moore matrix and the minimum rank distance of $\langle ME\rangle = \langle M\rangle E$ is at least $d-s$.
\end{enumerate}
	
\end{lemma}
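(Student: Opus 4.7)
The plan is to dispatch all five items via a single unifying tool: a codeword $\boldsymbol{\lambda} M$ equals $(P(\alpha_1),\ldots,P(\alpha_N))$ where $P(x)=\sum_{i=0}^{k-1}\lambda_{i+1}x^{q^i}$ is an $\F_q$-linearized polynomial, and since $P$ is $\F_q$-linear, the $\F_q$-span of the coordinates of $\boldsymbol{\lambda} M$ is $P(V)$, where $V=\langle\alpha_1,\ldots,\alpha_N\rangle_{\F_q}$ has dimension $n$. For part 1, a nonzero $P$ of $q$-degree at most $k-1$ has kernel in $V$ of $\F_q$-dimension at most $k-1$, so $\dim P(V)\geq n-k+1$; this yields both the linear independence of the rows (a linear dependence would force $P$ to vanish on $V$, hence identically since $n\geq k$) and the minimum rank distance lower bound $n-k+1$, which matches the Singleton upper bound.

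For parts 2 and 4, the key observation is that $\langle M\rangle+\langle M^{([1])}\rangle$ is the row span of the $(k+1)\times N$ Moore matrix $M'$ with the same generator $\boldsymbol{\alpha}$. Applying part 1 to $M'$ (valid since $k+1\leq n$, which holds under the hypothesis $k<n$ of part 2 and is equivalent to $d>1$ in part 4) gives $\dim\langle M'\rangle=k+1$ and $d_{\textnormal{R}}^{\textnormal{min}}(\langle M'\rangle)=n-k=d-1$, and the intersection dimension in part 2 then follows by the dimension formula for subspaces. For part 3, Frobenius additivity gives $(\boldsymbol{\alpha}+\boldsymbol{\beta})^{([i])}=\boldsymbol{\alpha}^{([i])}+\boldsymbol{\beta}^{([i])}$, so $M+A$ is a Moore matrix with generator $\boldsymbol{\alpha}+\boldsymbol{\beta}$. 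One checks that the column rank of a Moore matrix equals the rank of its generator: each column is the image of $\beta_j$ under the injective $\F_q$-linear map $\gamma\mapsto(\gamma,\gamma^q,\ldots,\gamma^{q^{k-1}})^\top$, so the column rank equals $\dim_{\F_q}\langle\beta_1,\ldots,\beta_N\rangle_{\F_q}=\rk(\boldsymbol{\beta})=r$. The rank-metric inequality $\rk(\boldsymbol{\alpha}+\boldsymbol{\beta})\geq\rk(\boldsymbol{\alpha})-\rk(\boldsymbol{\beta})=n-r$, together with $r<n-k+1$, lets part 1 apply to $M+A$ and yields minimum rank distance at least $(n-r)-k+1$. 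For part 5, since $E\in\F_q^{N\times(N-s)}$ and Frobenius fixes $\F_q$, we have $(\boldsymbol{\alpha} E)^{([i])}=\boldsymbol{\alpha}^{([i])}E$, so $ME$ is Moore with generator $\boldsymbol{\alpha} E$; for the distance, complete $E$ to an invertible matrix $\tilde E=[E\mid F]\in\F_q^{N\times N}$ and note that for nonzero $c=c'E\in\langle M\rangle E$, the vector $c'\tilde E=(c,c'F)$ has the same rank as $c'$ (invertible $\F_q$-change of coordinates preserves rank), while its coordinates lie in the $\F_q$-span of those of $c$ together with at most $s$ extra elements from $c'F$, giving $\rk(c)\geq\rk(c')-s\geq d-s$.

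The main obstacle will be less any single computation than the careful bookkeeping of hypotheses needed to avoid degeneracy: in part 3 the threshold $r<n-k+1$ is precisely what ensures $\rk(\boldsymbol{\alpha}+\boldsymbol{\beta})\geq k$ so that part 1 applies non-vacuously to $M+A$, and in part 5 one must restrict to nonzero codewords of $\langle ME\rangle$—the map $c'\mapsto c'E$ can have a nontrivial kernel on $\langle M\rangle$—without this interfering with the lower bound. The auxiliary claim that the column rank of a Moore matrix equals the rank of its generator, used implicitly in part 3, is worth laying out carefully as it is where the $\F_q$-linear structure of columns (as opposed to the $\F_{q^m}$-structure of rows) enters in an essential way.
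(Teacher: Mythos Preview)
Your argument is correct throughout, and parts 2, 4, and 5 match the paper's proof almost line for line (the paper also passes to the $(k+1)$-row Moore matrix for parts 2 and 4, and also completes $E$ to an invertible $\F_q$-matrix for part 5).

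The differences are in parts 1 and 3. For part 1, the paper simply cites Gabidulin's original theorems for the case $n=N$ and a lemma from \cite{HTM15} for the upper bound when $N>n$, whereas you give a self-contained proof via the linearized-polynomial description $\boldsymbol{\lambda} M=(P(\alpha_1),\ldots,P(\alpha_N))$ and the rank-nullity estimate $\dim_{\F_q}\ker P\leq k-1$. Your route is more informative and makes the later parts transparent; one small caveat is your phrase ``matches the Singleton upper bound'': when $N>n$ the Singleton bound for a length-$N$ code only gives $d\leq N-k+1$, not $n-k+1$. The sharp upper bound does fall out of your own setup---choose $P$ of $q$-degree $k-1$ whose kernel contains a fixed $(k-1)$-dimensional $\F_q$-subspace of $V$, so that $\dim_{\F_q}P(V)=n-k+1$---and you should say this explicitly rather than invoke Singleton.

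For part 3, the paper argues directly on codewords: a nonzero element of $\langle M+A\rangle$ has the form $\boldsymbol{\lambda} M+\boldsymbol{\lambda} A$ with $\rk(\boldsymbol{\lambda} M)\geq n-k+1$ and $\rk(\boldsymbol{\lambda} A)\leq r$, and then applies the rank-metric triangle inequality. You instead push the triangle inequality down to the generators, showing $\rk(\boldsymbol{\alpha}+\boldsymbol{\beta})\geq n-r\geq k$ and then feeding $M+A$ back into part 1. Both are valid; your version has the pleasant by-product of the auxiliary observation that the column rank of a Moore matrix equals the rank of its generator, which the paper never states but which is genuinely useful elsewhere.
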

\begin{proof}
 \begin{enumerate}
	\item The case $n=N$ is given in Theorems 6 and 7 of \cite{Gabidulin85}. Thus we know that, if $N> n$, we can puncture the vector space $\langle M \rangle $ to get a space $\langle M'\rangle$ of length $n$, dimension $k$ and minimum rank distance $n-k+1$. Hence the minimum rank distance of $\langle M\rangle$ is at least $n-k+1$. That it cannot be greater follows from Lemma 4.7 of \cite{HTM15}.
	\item The first statement follows easily from the Moore matrix structure. This implies that \[\dim (\langle M\rangle + \langle M^{(q)} \rangle) = \dim (\langle M\rangle) +\dim (\langle M^{(q)}\rangle ) - \dim(\langle M \rangle \cap \langle M \rangle^{(q)} ) = k+1 .\]
	\item The first statement follows from the fact that $(x+y)^{[i]}= x^{[i]}+y^{[i]}$ for any $x,y \in \F_{q^m}$. Therefore the Moore structure is preserved under addition of matrices. For the second part note that any element $\boldsymbol{a}\in\langle A\rangle$ has rank at most $r$ and any non-zero element $\boldsymbol{m}_i\in\langle M\rangle$ has rank at least $n-k+1$. Hence $\boldsymbol{a}$ can change the rank of $\boldsymbol{m}\pm \boldsymbol{a}$ by at most $r$, i.e.\ the rank of any non-zero element of $\langle M+A \rangle$ has rank at least $n-k+1-r$.
	\item
	Since the minimum rank distance of $\langle M \rangle$ is $d>1$, it follows from 2.\ that $\dim(\langle M\rangle + \langle M^{([1])}\rangle) = k+1$. Then part 1.\ implies that 
	 the minimum rank distance of $\langle M\rangle + \langle M^{([1])}\rangle$ is $n-(k+1)+1=d-1$. 
	\item
	Let $E' \in \F_q^{n\times s}$ be such that $[E\mid E']$ has full rank. Then, $[E\mid E']$ is an isometry, and so $\langle G[E\mid E'] \rangle$ has minimum rank distance $d$. Removing the last $s$ columns gives $\langle GE\rangle$, which can only decrease the rank by at most $s$.

\end{enumerate}

\end{proof}



A well-known class of codes in the rank metric are the Gabidulin codes \cite{Gabidulin85}. Gabidulin codes are those whose generator matrix is a Moore matrix in which the generating vector has full rank:

\begin{definition}
	Fix $k \leq n \leq m$, and let $\boldsymbol{\alpha} = (\alpha_1,  \dots , \alpha_n) \in \F_{q^m}^n$, $\rk(\boldsymbol{\alpha}) = n$. The \textit{Gabidulin code} of length $n$ and dimension $k$ over $\F_{q^m}$, denoted by $\Gab_{n,k}(\alpha)$ is given by the row space of the matrix, \begin{equation}\label{eq:genmatrix}
	G = \left(\begin{array}{cccc} \alpha_1 & \alpha_2 & \ldots & \alpha_n \\ \alpha_1^{[1]} & \alpha_2^{[1]} & \ldots & \alpha_n^{[1]} \\ & & \vdots & \\ \alpha_1^{[k-1]} & \alpha_2^{[k-1]} & \ldots & \alpha_n^{[k-1]}. \end{array}\right).
\end{equation}
\end{definition}

From Lemma \ref{l:moore}, we have that Gabidulin codes are MRD codes. Moreover, they have efficient decoding algorithms \cite{Silva09,Wachter13,Gabidulin85}. Gabidulin codes are also closed under the linear isometries of the rank-metric (for isometries of rank-metric codes see e.g. \cite{Berger03,Morrison13}). Specifically, if $\beta \in \F_{q^m}^*$ and $\sigma \in\GL_n(\F_q)$, then $\beta \textnormal{Gab}_{n,k}(\boldsymbol{\alpha})\sigma = \textnormal{Gab}_{n,k}(\beta\boldsymbol{\alpha}\sigma)$.

\subsection{Decoding From an Arbitrary Generator Matrix}\label{ss:decoding}

McEliece cryptosystems based on Generalized Reed-Solomon (GRS) codes were effectively broken by Sidel'nikov-Shestakov \cite{Sidelnikov92}. Their attack allows one to recover the generating vector of a GRS code, and therefore a decoding algorithm. Similarly, in the case of Gabidulin codes, a decoding algorithm can be found if one knows the canonical generator matrix of the code. Using a simple method, we can also recover a decoding algorithm if the generator matrix is not in canonical form, as described in the following.

Consider the Gabidulin code $\textnormal{Gab}_{n,k}(\boldsymbol{\alpha})$ with dimension $1 < k < n$ and generator matrix $SG$, where $S\in\GL_k(\F_{q^m})$ and $G$ of the form \eqref{eq:genmatrix}. Then $\textnormal{Gab}_{n,k}(\boldsymbol{\alpha})^{([1])}\cap \textnormal{Gab}_{n,k}(\boldsymbol{\alpha})$ is the Gabidulin code $\textnormal{Gab}_{n,k-1}(\boldsymbol{\alpha}^{([1])})$ (see Lemma \ref{l:moore}). Iterating with this new Gabidulin code, we can eventually obtain a code of dimension $1$, which is generated by $\boldsymbol{\alpha}^{([k-1])}$. If we take some non-zero element of this space, it has the form $\beta \boldsymbol{\alpha}^{([k-1])}$, for some $\beta \in \F_{q^m}$. Applying the Frobenius map coordinate-wise $m-k+1$ times, we obtain an element of the form $\beta^{[m-k+1]}\boldsymbol{\alpha}$. Using this element, we can construct a generator matrix, $BG$, for $\textnormal{Gab}_{n,k}(\boldsymbol{\alpha})$ which will have the form $$BG = \left(\begin{array}{cccc} \beta^{[m-k+1]} &  &  &  \\ & \beta^{[m-k+2]} &  &  \\ & & \ddots & \\  &  &  & \beta \end{array}\right) \left(\begin{array}{cccc} \alpha_1 & \alpha_2 & \ldots &\alpha_n \\ \alpha_1^{[1]} & \alpha_2^{[1]} & \ldots & \alpha_n^{[1]} \\ & & \vdots & \\ \alpha_1^{[k-1]} & \alpha_2^{[k-1]} & \ldots & \alpha_n^{[k-1]}\end{array}\right).$$ The change of basis from $SG$ to $BG$ is then given by $BS^{-1}$. For a message $\boldsymbol{m} \in \F_{q^m}^k$, encoded as $\boldsymbol{m}SG$, we can now decode with respect to $\textnormal{Gab}_{n,k}(\beta^{[m-k+1]}\boldsymbol{\alpha})$ to obtain $\boldsymbol{m}SB^{-1}$. Then, applying $BS^{-1}$, we can recover $\boldsymbol{m}$.

\subsection{GPT and GGPT Cryptosystems}\label{ss:GPT}

Let $S\in \GL_k(\F_{q^m})$, $G \in \F_{q^m}^{k\times n}$ be a generator matrix of a Gabidulin code, say $\textnormal{Gab}_{n,k}(\boldsymbol{\alpha})$, capable of correcting $t'$ errors, and $X\in \F_{q^m}^{k\times n}$ be a matrix of column rank $t < t'$. We define 
\begin{equation}G_{\textnormal{pub}} := SG + X.  
\end{equation}
We call a \textit{GPT cryptosystem} one in which the public key is given by the pair 
\begin{equation}\label{eq:kpub}\kappa_{\textnormal{pub}} = (G_{\textnormal{pub}}, t'-t),\end{equation} and the private key is given by \begin{equation}\label{eq:ksec}\kappa_{\textnormal{pvt}} = (G, S).\end{equation} 
An encryption of a message $\boldsymbol{m} \in \F_{q^m}^k$ is given by 
$$\boldsymbol{m}G_{\textnormal{pub}} + \boldsymbol{e} = \boldsymbol{m}SG +\boldsymbol{m}X + \boldsymbol{e},$$ 
where $\boldsymbol{e}\in\F_{q^m}^n$ is a randomly chosen vector of rank at most $t'-t$. The product $\boldsymbol{m}S$ can be recovered from a decoding algorithm for $\textnormal{Gab}_{n,k}(\boldsymbol{\alpha})$ because all elements of $\langle X\rangle$ have weight at most $t$. Specifically, if $\wt_{\textnormal{R}}(\boldsymbol{e}) \leq t'-t$, 
$$\wt_{\textnormal{R}}(\boldsymbol{m}X + \boldsymbol{e}) \leq \wt_{\textnormal{R}}(\boldsymbol{m}X) + \wt_{\textnormal{R}}(\boldsymbol{e}) \leq t' .$$ Inverting $S$, the message $\boldsymbol{m}$ can then be recovered. 
We will call the elements of the form $\boldsymbol{m}X$ the \textit{designed error} associated with the encryption of $\boldsymbol{m}$, and $X$ the \textit{designed error matrix}.

\vspace{0.3cm}

In \cite{Rashwan10, Loidreau10} the authors consider an alternative version which we call the \emph{generalized GPT (GGPT) cryptosystem}. This system uses a public matrix of the form 
\begin{equation}\label{eq:kpubmod} \hat{G}_{\textnormal{pub}} := S[X \mid G]\sigma \in \F_{q^m}^{k\times (n+\hat t)},\end{equation} 
where $G$ is as before, $X \in \F_{q^m}^{k\times \hat t}$ is a matrix of column rank $\hat t$, $S \in \GL_k(\F_{q^m})$, and $\sigma \in \GL_{n+\hat t}(\F_q)$. The public key is given by
\begin{equation}\kappa_{\textnormal{pub}} = (\hat{G}_{\textnormal{pub}}, t'),\end{equation} and the private key is given by \begin{equation}\kappa_{\textnormal{pvt}} = (G, S, \sigma).\end{equation} 
In the GGPT cryptosystem, an encryption of $\boldsymbol{m} \in \F_{q^m}^k$ is given by 
$$\boldsymbol{m}\hat{G}_{\textnormal{pub}} + \boldsymbol{e},$$ with $\rk(\boldsymbol{e}) \leq t'$. To recover $\boldsymbol{m}$, one first computes $$(\boldsymbol{m}\hat{G}_{\textnormal{pub}} + \boldsymbol{e})\sigma^{-1},$$ and then ignores the first $\hat t$ coordinates. Decoding the last $n$ coordinates with respect to $\textnormal{Gab}_{n,k}(\boldsymbol{\alpha})$, one obtains $\boldsymbol{m}S$, and by applying $S^{-1}$, the message $\boldsymbol{m}$ can be recovered.

\subsection{Overbeck's Attack}\label{ss:Overbeck}

We will describe Overbeck's attack from \cite{Overbeck08} for the case of the GGPT cryptosystem; the attack for the GPT case is analogous. Let $G \in \F_{q^m}^{k\times n}$ be a generator matrix for the Gabidulin code, $\textnormal{Gab}_{n,k}(\boldsymbol{\alpha})$. The first step in Overbeck's attack is to consider the extended matrix (for some $u\geq 1$) 
$$G_{\textnormal{ext}} :=\left(\begin{array}{c} S[X\mid G]\sigma \\ (S[X\mid G]\sigma)^{([1])} \\ \vdots \\ (S[X\mid G]\sigma)^{([u])}\end{array}\right) = \tilde{S}\left(\begin{array}{c|c} X & G \\ X^{([1])} & G^{([1])} \\ \vdots & \vdots \\ X^{([u])} & G^{([u])} \end{array}\right) \sigma.$$ 
Since the $n$ right-most columns of $G_{\textnormal{ext}}\sigma^{-1}$ span the Gabidulin code $\textnormal{Gab}_{k+u}(\boldsymbol{\alpha})$, the matrix can be brought into the form of 
\begin{equation}\label{eq:GGPTform} G_{\textnormal{ext}}' = \tilde{S}'\left(\begin{array}{c|c} X^* & G^* \\ X^{**} & 0\end{array}\right) \sigma,\end{equation} 
by some suitable row transformation, where $X^* \in \F_{q^m}^{(k+u)\times \hat t}$, $G^* \in \F_{q^m}^{(k+u)\times n}$ is a generator matrix of $\textnormal{Gab}_{k+u}(\boldsymbol{\alpha})$, and $X^{**}\in\F_{q^m}^{(k-1)u\times \hat t}$. If $X^{**}$ has rank $\hat t$, then any element of $\langle G_{\textnormal{ext}}'\rangle^{\perp}=\langle G_{\textnormal{ext}}\rangle^{\perp}$ has the form $\sigma^{-1}[0\mid \boldsymbol{h}]$, where $\boldsymbol{h} \in \textnormal{Gab}_{k+u}(\boldsymbol{\alpha})^\perp$. With this information one can reconstruct the code $\textnormal{Gab}_{n,k}(\boldsymbol{\alpha})$ and recover the encrypted message.

In the case when $X^{**}$ does not have full rank, Overbeck's attack fails, since $\textnormal{Gab}_{n,k}(\boldsymbol{\alpha})$ cannot be reconstructed from the dual of $\langle G_{\textnormal{ext}}\rangle^{\perp}$.
This is why, in \cite{Loidreau10}, Loidreau suggests to use a randomly chosen $X$ of low rank, $a$, since then the rank of $X^{**}$ can be bounded above. Specifically, to resist Overbeck's attack, one should choose $\hat t > (n-k)a$. However, this would drastically increase the key size of the cryptosystem. 
To avoid this problem of large key size, the Smart Approach considered in \cite{Rashwan10}, is to design $X$ in a structured way so that $X^{**}$ is rank-deficient, without necessarily having to increase $\hat t$. However the structure of $X$ makes the Smart Approach more vulnerable to attacks. 
A more detailed description of these two systems is given in Section \ref{s:ext}.

\section{Preliminary Results}\label{s:presults}

In this section, we show that one can decompose a matrix (or vector) of low column rank into the product of two matrices, one of which has full column rank, and the other with elements restricted to $\F_q$. 
Moreover, we prove some results about the coordinate-wise Frobenius map, as well as the structure of the designed error matrix, which we will need later on in our attack.

	We will denote by $M_{t\times n, r}(\F_q)$ the set of $t\times n$ matrices over $\F_q$ with rank $r$.
	The sphere around the origin of rank radius $t$ in $\F_{q^m}^n$ will be denoted by $$S_{n,t}^{\textnormal{R}}(\F_{q^m}) := \{\boldsymbol{x} \in \F_{q^m}^n \mid \rk(\boldsymbol{x}) = t\}.$$

\begin{proposition}\label{p:vbundle} 
	$$S_{n,t}^{\textnormal{R}}(\F_{q^m}) \cong S_{t,t}^{\textnormal{R}}(\F_{q^m}) \times M_{t\times n,t}(\F_q) / \GL_t(\F_q),$$ where $M_{t\times n,t}(\F_q) / \GL_t(\F_q)$ is the set of equivalence classes of $t\times n$ matrices over $\F_q$ of rank $t$, where two matrices are equivalent if they have the same row span.
\end{proposition}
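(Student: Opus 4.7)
The plan is to exhibit an explicit bijection by writing every rank-$t$ vector as a product of a length-$t$ rank-$t$ vector and a rank-$t$ matrix over $\F_q$. Given $\boldsymbol{x} \in S_{n,t}^{\textnormal{R}}(\F_{q^m})$, the $\F_q$-span $V$ of its coordinates is $t$-dimensional. I would pick any basis $\boldsymbol{y}=(y_1,\dots,y_t)$ of $V$; each coordinate $x_j$ then has a unique expression $x_j=\sum_i a_{ij}y_i$ with $a_{ij}\in\F_q$, giving $\boldsymbol{x}=\boldsymbol{y}A$ for $A=(a_{ij})\in\F_q^{t\times n}$. Since the columns of $A$ represent the $x_j$'s in the basis $\boldsymbol{y}$ and their $\F_q$-span corresponds to $V$, the column rank of $A$ is $t$, so $A\in M_{t\times n,t}(\F_q)$. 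Clearly $\boldsymbol{y}\in S_{t,t}^{\textnormal{R}}(\F_{q^m})$ because the $y_i$ are $\F_q$-linearly independent.

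Next I would show that the equivalence class $[A]$ is intrinsic to $\boldsymbol{x}$: if $\boldsymbol{x}=\boldsymbol{y}A=\boldsymbol{y}'A'$ are two such decompositions, then $\boldsymbol{y}$ and $\boldsymbol{y}'$ are both bases of $V$, hence $\boldsymbol{y}'=\boldsymbol{y}U$ for a unique $U\in\GL_t(\F_q)$; substituting gives $\boldsymbol{y}(A-UA')=\boldsymbol{0}$, and the $\F_q$-linear independence of the $y_i$ forces $A=UA'$, i.e.\ $[A]=[A']$. To define a bijection $\Psi\colon S_{t,t}^{\textnormal{R}}(\F_{q^m})\times M_{t\times n,t}(\F_q)/\GL_t(\F_q)\to S_{n,t}^{\textnormal{R}}(\F_{q^m})$, I would fix a canonical representative $A^*\in[A]$ in each class (for concreteness, the reduced row echelon form) and set $\Psi(\boldsymbol{y},[A])=\boldsymbol{y}A^*$. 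The image lies in $S_{n,t}^{\textnormal{R}}$ because $\rk(A^*)=t$ implies that the $\F_q$-span of the coordinates of $\boldsymbol{y}A^*$ is exactly the $\F_q$-span of $\boldsymbol{y}$, which is $t$-dimensional.

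For bijectivity: injectivity of $\Psi$ follows from the invariance just established (which forces $A^*=(A')^*$) together with the fact that the pivot columns of $A^*$ form the $t\times t$ identity, so $\boldsymbol{y}$ can be read off directly from the corresponding coordinates of $\boldsymbol{y}A^*$. Surjectivity follows from the opening paragraph: writing $\boldsymbol{x}=\boldsymbol{y}A$ and letting $U\in\GL_t(\F_q)$ satisfy $A^*=UA$, one has $\boldsymbol{x}=(\boldsymbol{y}U^{-1})A^*=\Psi(\boldsymbol{y}U^{-1},[A])$. The only real subtlety—the main obstacle if one approaches the statement naively—is that the seemingly natural rule $(\boldsymbol{y},[A])\mapsto \boldsymbol{y}A$ is \emph{not} well defined on equivalence classes, since replacing $A$ by $UA$ changes the product; passing to a section of the quotient map $M_{t\times n,t}(\F_q)\to M_{t\times n,t}(\F_q)/\GL_t(\F_q)$ (the rref, or any other choice) is what rescues the construction, and a preliminary cardinality check $|S_{n,t}^{\textnormal{R}}|=\binom{m}{t}_q\cdot|M_{t\times n,t}|=|S_{t,t}^{\textnormal{R}}|\cdot|M_{t\times n,t}|/|\GL_t|$ provides a sanity check before writing down $\Psi$.
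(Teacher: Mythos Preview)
Your proof is correct and follows essentially the same route as the paper: both fix reduced row echelon form as a section of the quotient $M_{t\times n,t}(\F_q)\to M_{t\times n,t}(\F_q)/\GL_t(\F_q)$, define the map $(\boldsymbol{y},[A])\mapsto \boldsymbol{y}A^*$, and verify bijectivity by reading $\boldsymbol{y}$ off the pivot columns and using uniqueness of the RREF representative. Your write-up is in fact a bit more explicit than the paper's about why a section is needed (the naive map being ill-defined on classes), and the cardinality check you mention is a nice addition, but the underlying argument is the same.
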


\begin{proof}
As representatives of the cosets in $M_{t\times n}(\F_q) / \GL_t(\F_q)$ we consider the reduced row echelon form of the respective row span of the elements of the coset. 
	Define the map 
\begin{align*}
 \varphi \colon S_{t,t}^{\textnormal{R}}(\F_{q^m}) \times M_{t\times n}(\F_q) / \GL_t(\F_q) &\longrightarrow S_{n,t}^{\textnormal{R}}(\F_{q^m})\\
(\boldsymbol{v},U)&\longmapsto \boldsymbol{v}U . 
\end{align*}
We now show that $\varphi$ is bijective.

We first show that $\varphi$ is surjective. For this consider an arbitrary element in the image of $\varphi$, i.e.\ a vector $\boldsymbol{x} \in \F_{q^m}^n$ of rank $t$, and let $x_{i_1},  \dots , x_{i_t}$ be the first $t$ independent entries of $\boldsymbol{x}$, in positions $i_1,  \dots , i_t$. Then, the remaining $n-t$ entries of $\boldsymbol{x}$ can be expressed as an $\F_q$-linear combination of $x_{i_1}, \dots ,x_{i_t}$, thus we can write $\boldsymbol{x} = (x_{i_1},  \dots , x_{i_t}) M$ for some matrix $M  \in \F_q^{t\times n}$. Then there exists $S\in \GL_t(\F_q)$ such that $U=S-M$ is in reduced row echelon form. We get  $ (x_{i_1},  \dots , x_{i_t})S\in S_{t,t}^{\textnormal{R}}(\F_{q^m})$ and  $\boldsymbol{x} = \varphi( (x_{i_1},  \dots , x_{i_t})S,U)$, thus $\varphi$ is surjective. 

To show injectivity, suppose that there are two preimages, i.e.\ $\boldsymbol{x} = \varphi(\boldsymbol{v}, U) = \varphi(\boldsymbol{v'}, U') $. Without loss of generality, we can assume that $U= [ I_t \mid *]$. Denote by $U'_j$ the $j$th column of $U'$. Then we have
\begin{align*}
 (x_1,\dots,x_t) = \boldsymbol{v} = (\boldsymbol{v'} U'_1,\dots,\boldsymbol{v'} U'_t ) .
\end{align*}
Since $\boldsymbol{v}$ has rank $t$, $U'_1,\dots, U'_t$ must be non-zero. Because $U'$ is in reduced row echelon form, we get $U'=[I_t\mid *]$ and hence
$$(x_1,\dots,x_t) = \boldsymbol{v}= \boldsymbol{v'}.$$
We furthermore have $x_j = \boldsymbol{v}U_j = \boldsymbol{v'}U'_j$ for $j=t+1,\dots, n$. Thus
$$  \boldsymbol{v}U_j = \boldsymbol{v'}U'_j \iff  \boldsymbol{v}U_j = \boldsymbol{v}U'_j \iff  \boldsymbol{v}(U_j-U'_j)=0 .$$
Since $\rk(\boldsymbol{v})=t$, we get  $U_j-U'_j=0$ for $j=t+1,\dots,n$. Thus $U=U'$ and we have shown that $\varphi$ is injective.
\end{proof}

One can think of the space $M_{t\times n,t}(\F_q) / \GL_t(\F_q)$ as a set of matrices parameterizing the Grassmannian $\Gr(t, \F_q^n)$, i.e.\ the space of $t$-dimensional subspaces of $\F_q^n$.  According to the proof of Proposition \ref{p:vbundle}, we can express a vector $\boldsymbol{x}$ of rank $t$ as $\boldsymbol{x} = \hat{\boldsymbol{x}}U$ for any matrix representation $U$ of a certain element of the Grassmannian $\Gr(t, \F_q^n)$.

We can easily extend the result of Proposition \ref{p:vbundle} from vectors of rank $t$ to matrices of column rank $t$. Then we get the following result.
\begin{corollary}\label{c:Gsupport}
 Let $X \in \F_{q^m}^{k\times n}$ be a matrix of rank $k$ and column rank $t$. Then there exist $V\in \F_{q^m}^{k\times t}$ with $\rk(V)=k$ and $ U\in \F_q^{t\times n}$ with $\rk(U)=t$, such that $$X=VU.$$
\end{corollary}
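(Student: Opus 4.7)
The plan is to give a direct, constructive proof working column-by-column rather than invoking Proposition \ref{p:vbundle} row-by-row (which does not obviously produce a common $U$). The hypothesis to exploit is that the column span of $X$ has $\F_q$-dimension $t$, so I can pick $t$ columns of $X$ itself to serve as an $\F_q$-basis, bundle them into $V$, and then define $U$ to record the $\F_q$-coordinates of the remaining columns.

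More precisely, I would denote the columns of $X$ by $c_1,\dots,c_n \in \F_{q^m}^k$ and choose indices $i_1,\dots,i_t$ so that $c_{i_1},\dots,c_{i_t}$ form an $\F_q$-basis of $\SPAN_{\F_q}(c_1,\dots,c_n)$. Set $V := [\,c_{i_1} \mid \cdots \mid c_{i_t}\,] \in \F_{q^m}^{k \times t}$. For each $j$, expand $c_j = \sum_{l=1}^{t} u_{lj}\, c_{i_l}$ with $u_{lj} \in \F_q$, and let $U := (u_{lj}) \in \F_q^{t\times n}$. The factorization $X = VU$ then holds by construction.

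It remains to check the two rank claims. For $V$: every column of $X$ is an $\F_q$-linear, hence $\F_{q^m}$-linear, combination of the columns of $V$, and the columns of $V$ are themselves columns of $X$. Thus the $\F_{q^m}$-span of the columns of $V$ coincides with the $\F_{q^m}$-column span of $X$, which has dimension $\rk(X)=k$; since $V$ has only $k$ rows, this forces $\rk(V)=k$. For $U$: for each $l$, the $i_l$-th column of $U$ is the coordinate vector of $c_{i_l}$ with respect to the chosen basis, i.e.\ the standard unit vector $e_l$. Hence the $t\times t$ submatrix of $U$ obtained by extracting columns $i_1,\dots,i_t$ is the identity $I_t$, so $\rk(U)=t$.

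There is no serious obstacle here; the statement is essentially a bookkeeping consequence of the definition of column rank. The only subtle point is the verification that $\rk(V)=k$: it is tempting to confuse the $\F_q$-column rank of $V$ (which is $t$ by construction) with its $\F_{q^m}$-rank (which is $k$), and one must use the hypothesis $\rk(X)=k$ together with the equality of $\F_q$- and $\F_{q^m}$-spans of the chosen columns to bridge the two.
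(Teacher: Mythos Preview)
Your proof is correct and is precisely the natural ``extension'' of the surjectivity argument in Proposition~\ref{p:vbundle} that the paper alludes to: where Proposition~\ref{p:vbundle} picks $t$ linearly independent \emph{entries} of a vector over $\F_q$ and expresses the rest in that basis, you pick $t$ linearly independent \emph{columns} of $X$ over $\F_q$ and do the same. The paper gives no explicit proof beyond the phrase ``easily extend,'' and your write-up is exactly what that phrase is pointing at; your verification that $\rk(V)=k$ via the equality of $\F_{q^m}$-column spans of $V$ and $X$ is the only non-automatic step and you handle it correctly.
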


\begin{definition}
 Let $X \in \F_{q^m}^{k\times n}$ be a matrix of rank $k$ and column rank $t$ and $V\in \F_{q^m}^{k\times t}, U\in \F_q^{t\times n}$ such that $X=VU$. We call $\langle U \rangle$ the  \textit{Grassmann support} of $X$ which will be denoted by $\langle U\rangle = \supp_{\Gr}(X)$. By abuse of notation we will also call any matrix representation $U\in \F_q^{t\times n}$ of this space the Grassmann support of $X$.
\end{definition}

\begin{lemma}\label{lem:incl}
 	Let $X\in \F_{q^m}^{k\times n}$ be a matrix of rank $k$ and column rank $t\geq k$. Then $\langle X\rangle \subseteq \supp_\Gr (X) $ and the inclusion is strict if and only if $t>k$.
\end{lemma}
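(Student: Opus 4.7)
The plan is to let Corollary \ref{c:Gsupport} do essentially all of the work, and then read off the statement by a dimension count. The whole lemma is really just a reformulation of the factorization $X = VU$ into the language of subspaces of $\F_{q^m}^n$.

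First I would invoke Corollary \ref{c:Gsupport} to write $X = VU$ with $V \in \F_{q^m}^{k\times t}$ of rank $k$ and $U \in \F_q^{t\times n}$ of rank $t$, so that by definition $\supp_\Gr(X) = \langle U \rangle$. The containment $\langle X \rangle \subseteq \supp_\Gr(X)$ is then immediate: the $i$th row of $X$ equals $v_i U$, where $v_i \in \F_{q^m}^t$ is the $i$th row of $V$, and is therefore an $\F_{q^m}$-linear combination of the rows of $U$. Hence every element of $\langle X \rangle$ lies in $\langle U \rangle$.

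For the strictness criterion I would simply compare $\F_{q^m}$-dimensions. On one hand $\dim_{\F_{q^m}}\langle X\rangle = \rk(X) = k$. On the other hand $\dim_{\F_{q^m}}\langle U \rangle = t$, using the standard fact that the rank of a matrix with entries in $\F_q$ is unchanged when it is viewed over the extension field $\F_{q^m}$. Since $\langle X \rangle \subseteq \langle U \rangle$ and both are $\F_{q^m}$-subspaces of $\F_{q^m}^n$, the inclusion is equality precisely when $k=t$ and strict precisely when $t > k$, which is the assertion.

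There is no real obstacle here; the only point one has to be careful about is that the $\supp_\Gr$ notation refers to the $\F_{q^m}$-span of the rows of $U$ (rather than the $\F_q$-span), so that the dimension count works as claimed. Well-definedness of $\supp_\Gr(X)$ independent of the chosen factorization $X = VU$ is not needed for the proof of this lemma, since the containment and dimension arguments go through for any such choice.
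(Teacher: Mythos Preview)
Your proof is correct and follows essentially the same approach as the paper: invoke Corollary~\ref{c:Gsupport} to factor $X = VU$, observe that the rows of $X$ are $\F_{q^m}$-linear combinations of the rows of $U$ to get the inclusion, and then compare the dimensions $k$ and $t$ to decide strictness. Your additional remark that the rank of $U$ is unchanged when passing from $\F_q$ to $\F_{q^m}$ makes explicit a point the paper leaves implicit, but otherwise the arguments are the same.
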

\begin{proof}
Let $U \in \F_q^{t\times n}$  be the  Grassmann support of $X$.
 By Corollary \ref{c:Gsupport}, we can write $X=VU$ for some $V\in \F_{q^m}^{k\times t}$. Thus every row of $X$ is a $\F_{q^m}$-linear combination of the rows of $U$, which implies that $\langle X\rangle \subseteq \langle U\rangle$. Since $\dim(\langle U\rangle)=t$ and $\dim(\langle X\rangle)=k$, we get equality if and only if $k=t$.
\end{proof}

%
%
%

The following two lemmas are needed to prove the main results of this section in Theorems \ref{t:frobchain} and \ref{t:fullsum}.
\begin{lemma}\label{l:crk}
Let $X \in \F_{q^m}^{k\times n}$ be a matrix of column rank $t$ and $S \in \GL_k(\F_{q^m})$. Then, $SX$ also has column rank $t$.
\end{lemma}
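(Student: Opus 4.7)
The plan is to argue directly from the definition of column rank over $\F_q$. Writing $\boldsymbol{x}_1,\dots,\boldsymbol{x}_n \in \F_{q^m}^k$ for the columns of $X$, the column rank $\crk(X)$ is by definition $\dim_{\F_q} \SPAN_{\F_q}(\boldsymbol{x}_1,\dots,\boldsymbol{x}_n)$, so the hypothesis is that this $\F_q$-span, call it $C$, has $\F_q$-dimension $t$. The columns of $SX$ are exactly $S\boldsymbol{x}_1,\dots,S\boldsymbol{x}_n$, and since matrix multiplication is $\F_q$-linear (in fact $\F_{q^m}$-linear) in each factor, the $\F_q$-span of the columns of $SX$ equals $S\cdot C := \{S\boldsymbol{v} \mid \boldsymbol{v}\in C\}$.

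Next I would observe that since $S \in \GL_k(\F_{q^m})$ it defines an $\F_{q^m}$-linear automorphism of $\F_{q^m}^k$, and the inclusion $\F_q \subset \F_{q^m}$ makes it also an $\F_q$-linear automorphism (its $\F_{q^m}$-linear inverse $S^{-1}$ is automatically $\F_q$-linear). In particular $S$ preserves $\F_q$-dimensions of $\F_q$-subspaces of $\F_{q^m}^k$, so $\dim_{\F_q}(S\cdot C) = \dim_{\F_q} C = t$. Combining the two observations yields $\crk(SX)=t$.

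The only real subtlety is the usual one when working in this context: carefully keeping the $\F_q$- and $\F_{q^m}$-structures separate and invoking invertibility of $S$ over $\F_{q^m}$ to obtain invertibility as an $\F_q$-linear map. No computation or appeal to the earlier Moore-matrix lemmas is needed; the result is essentially a restatement of the fact that an invertible linear map over a larger field is also an invertible linear map over any subfield. (As an alternative one could try to appeal to Corollary \ref{c:Gsupport} by writing $X=VU$ with $U \in \F_q^{t\times n}$ of $\F_q$-rank $t$ so that $SX=(SV)U$, but that corollary is stated under the additional hypothesis $\rk(X)=k$, which is not present here, so the direct dimension argument is cleaner.)
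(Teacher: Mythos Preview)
Your proof is correct and essentially the same idea as the paper's: both use that left multiplication by $S$ is an $\F_q$-linear bijection on columns, so $\F_q$-independence of columns is preserved. The paper phrases this as a contradiction argument with explicit $\F_q$-linear combinations of columns, while you phrase it directly as preservation of $\F_q$-dimension under an $\F_q$-linear automorphism; the underlying argument is the same.
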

\begin{proof}
Denote the $i$th column of $X$ by $X_{i}$. 
Assume that $SX$ has column rank less than $t$, i.e.\ for any $i_{1}<\dots <i_{t}\in \{1,\dots,n\}$ there exist $a_{1},\dots, a_{t}\in \F_{q}$ such that
\[\sum_{\ell=1}^{t} a_{\ell} (SX)_{i_{\ell}}  = 0 \iff S\sum_{\ell=1}^{t} a_{\ell} X_{i_{\ell}} =0 \iff \sum_{\ell=1}^{t} a_{\ell} X_{i_{\ell}} =0 .\]
This is a contradiction to the fact that the column rank of $X$ is $t$.
\end{proof}

The following properties of the coordinate-wise Frobenius map will be used throughout the paper. The first statement follows straightforwardly from the $\F_q$-linearity of the Frobenius map, the second and the third are known and can be found, for instance, in \cite{Giorgetti10,HTM15}.

\begin{lemma}\label{l:results} 
The following hold for any prime power $q$ and $0 < n \leq m$.
	\begin{enumerate}
		\item Let $\boldsymbol{x} \in \F_{q^m}^n$ have rank $r$. Then, $\boldsymbol{x}^{(q)}$ also has rank $r$.
		\item Let $M \in \GL_n(\F_{q^m})$. Then, $(M^{-1})^{(q)} = (M^{(q)})^{-1}$.
		\item Let $\mathcal{S}\subset \F_{q^m}^{n}$ be an $\F_{q^m}$-subspace. Then, $\mathcal{S}^{(q)} = \mathcal{S}$ if and only if $\mathcal{S}$ has a basis contained in $\F_q^{n}$. 
	\end{enumerate}
\end{lemma}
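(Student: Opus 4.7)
The plan is to prove the three claims in turn; the first two amount to quick verifications and the real content lies in the converse of (3).

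For (1), I would observe that the Frobenius map $\phi\colon \F_{q^m}\to\F_{q^m}$, $a\mapsto a^q$, is an $\F_q$-linear field automorphism fixing $\F_q$ elementwise. Since the rank of $\boldsymbol{x}\in\F_{q^m}^n$ equals the $\F_q$-dimension of the subspace of $\F_{q^m}$ spanned by the coordinates of $\boldsymbol{x}$, and $\phi$ restricts to an $\F_q$-linear bijection from this subspace onto the analogous subspace for $\boldsymbol{x}^{(q)}$, the two ranks must agree.

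For (2), I would first record the entrywise identity $(AB)^{(q)}=A^{(q)}B^{(q)}$, which follows from the fact that Frobenius is a ring homomorphism on $\F_{q^m}$ and hence commutes with the sums and products appearing in matrix multiplication. Applying this to $MM^{-1}=I_n$ and using $I_n^{(q)}=I_n$ (its entries lie in $\F_q$) gives $M^{(q)}(M^{-1})^{(q)}=I_n$, and (2) follows on multiplying by $(M^{(q)})^{-1}$.

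For (3), the ``if'' direction is straightforward: if $\boldsymbol{b}_1,\dots,\boldsymbol{b}_k\in\F_q^n$ is a basis of $\mathcal{S}$, then each $\boldsymbol{b}_i$ is fixed by the coordinate-wise Frobenius, so any $\boldsymbol{v}=\sum_i c_i\boldsymbol{b}_i\in\mathcal{S}$ satisfies $\boldsymbol{v}^{(q)}=\sum_i c_i^q\boldsymbol{b}_i\in\mathcal{S}$. Hence $\mathcal{S}^{(q)}\subseteq\mathcal{S}$, and equality follows from a dimension count using part (1). The converse is the main obstacle. My plan is to invoke Galois descent for the cyclic extension $\F_{q^m}/\F_q$: fix a basis of $\mathcal{S}$, arrange it as the rows of $G\in\F_{q^m}^{k\times n}$, and use $\mathcal{S}^{(q)}=\mathcal{S}$ to write $G^{(q)}=AG$ for some $A\in\GL_k(\F_{q^m})$. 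Iterating and using that the $m$-th Frobenius power is the identity on $\F_{q^m}$ yields the cocycle relation $A^{(q^{m-1})}\cdots A^{(q)}A=I_k$. Hilbert's Theorem 90 for $\GL_k$ (equivalently, the Lang--Steinberg theorem) then supplies $B\in\GL_k(\F_{q^m})$ with $B^{(q)}A=B$, and a direct check yields $(BG)^{(q)}=B^{(q)}G^{(q)}=B^{(q)}AG=BG$, so $BG\in\F_q^{k\times n}$ and its rows furnish the desired $\F_q$-basis of $\mathcal{S}$. A second route, which avoids invoking Hilbert 90 by name, is to argue directly that $W=\mathcal{S}\cap\F_q^n$ has $\dim_{\F_q}W=\dim_{\F_{q^m}}\mathcal{S}$ by producing enough elements of $W$ via the twisted traces $v\mapsto\sum_{i=0}^{m-1} c^{q^i} v^{(q^i)}$ for $c\in\F_{q^m}$ and $v\in\mathcal{S}$; showing that these span $\mathcal{S}$ over $\F_{q^m}$ is the crux of this alternative approach.
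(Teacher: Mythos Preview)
The paper does not actually prove this lemma: it simply remarks that (1) follows from the $\F_q$-linearity of Frobenius and that (2) and (3) are known, citing \cite{Giorgetti10,HTM15}. Your proposal is correct and supplies what the paper omits. One small imprecision: in the ``if'' direction of (3) you appeal to part~(1) for the dimension count, but (1) concerns the rank of a single vector, not the $\F_{q^m}$-dimension of a subspace; what you really use is that the coordinate-wise Frobenius is a semi-linear bijection of $\F_{q^m}^n$ and hence carries $\F_{q^m}$-subspaces to $\F_{q^m}$-subspaces of the same dimension. Both of your routes for the converse of (3) are standard and valid; the Hilbert~90/Lang--Steinberg argument is the cleaner one and is essentially how the cited references handle it, while the trace construction is more elementary but leaves you with the extra (though routine) task of showing the traces actually span $\mathcal{S}$ over $\F_{q^m}$.
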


We saw in Lemma \ref{lem:incl} that if a matrix $X$, with Grassmann support $U$, has column rank which is greater than its rank, then $\langle X\rangle \subsetneq \langle U\rangle$. The following theorem shows that we can use the Frobenius map to recover $\langle U\rangle$ from $X$.

\begin{theorem}\label{t:frobchain}
	Let $X\in \F_{q^m}^{k\times n}$ be a matrix of column rank $s$. Then, for any $\ell\geq 0$,
	$$\sum_{i=0}^{s-1}\langle X\rangle^{([i])} = \sum_{i=0}^{s+\ell}\langle X\rangle^{([i])} = \supp_\Gr (X).$$ 
In particular, $$\dim\left( \sum_{i=0}^{s-1} \langle X\rangle^{([i])}\right) = s.$$ 
	\end{theorem}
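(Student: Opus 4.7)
By Corollary~\ref{c:Gsupport} (and its straightforward extension when $\rk(X)<k$), write $X=VU$ with $V\in \F_{q^m}^{k\times s}$ and $U\in \F_q^{s\times n}$ of rank $s$. Since $X$ has column rank $s$, the columns of $V$ must be $\F_q$-linearly independent. As $U$ has entries in $\F_q$, we have $U^{([i])}=U$, hence $X^{([i])}=V^{([i])}U$ for every $i\ge 0$, and
\[
\sum_{i=0}^{L}\langle X\rangle^{([i])} \;=\; \Bigl(\sum_{i=0}^{L}\langle V^{([i])}\rangle\Bigr)\,U \;\subseteq\; \F_{q^m}^{s}\cdot U \;=\; \langle U\rangle \;=\; \supp_\Gr(X).
\]
The rows of $U$ lie in $\F_q^{n}$ and are $\F_q$-independent, hence also $\F_{q^m}$-independent, so right multiplication by $U$ is injective on $\F_{q^m}^{s}$. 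Equality in the display is therefore equivalent to $\sum_{i=0}^{L}\langle V^{([i])}\rangle=\F_{q^m}^{s}$, i.e., the matrix $\hat V_{L}$ obtained by vertically stacking $V,V^{([1])},\dots,V^{([L])}$ having full column rank $s$ over $\F_{q^m}$. The task reduces to the case $L=s-1$.

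Suppose, for contradiction, that $\boldsymbol{c}\in \F_{q^m}^{s}\setminus\{0\}$ satisfies $V^{([i])}\boldsymbol{c}^{T}=0$ for $i=0,\dots,s-1$. Applying the inverse Frobenius $[-i]$ to the $i$th equation yields $V(\boldsymbol{c}^{([-i])})^{T}=0$, so each $(\boldsymbol{c}^{([-i])})^{T}$ lies in $\ker V$. Let $W\subseteq\ker V$ be the $\F_{q^m}$-span of these $s$ vectors in $\F_{q^m}^{s}$. By Lemma~\ref{l:moore}.1 applied to a Moore matrix generated (up to Frobenius shift) by $\boldsymbol{c}$, $\dim W=\rk(\boldsymbol{c})\le s$; since $s\ge\rk(\boldsymbol{c})$, these $s$ consecutive Frobenius powers already exhaust the full Frobenius orbit of $\boldsymbol{c}^{T}$, which makes $W$ invariant under the coordinate-wise Frobenius.

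By Lemma~\ref{l:results}.3, $W$ admits an $\F_{q^m}$-basis $\boldsymbol{b}_{1},\dots,\boldsymbol{b}_{r}\in \F_q^{s}$ with $r=\rk(\boldsymbol{c})\ge 1$. Because $W\subseteq\ker V$, each identity $V\boldsymbol{b}_{j}^{T}=0$ is a nontrivial $\F_q$-linear dependence among the columns of $V$, contradicting their $\F_q$-independence. Hence $\hat V_{s-1}$ has full column rank, giving $\sum_{i=0}^{s-1}\langle X\rangle^{([i])}=\supp_\Gr(X)$; the chain $\sum_{i=0}^{s-1}\langle X\rangle^{([i])}\subseteq\sum_{i=0}^{s+\ell}\langle X\rangle^{([i])}\subseteq\supp_\Gr(X)$ then forces equality for every $\ell\ge 0$, and the dimension assertion follows from $\dim\langle U\rangle=s$. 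I expect the Frobenius-invariance of $W$—which unlocks Lemma~\ref{l:results}.3 and transfers the $\F_{q^m}$-obstruction back to an $\F_q$-obstruction—to be the main delicate step; the rest is careful bookkeeping of Frobenius indices.
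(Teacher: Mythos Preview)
Your proof is correct and takes a genuinely different route from the paper's. The paper argues by chain stabilisation: the increasing sequence $\sum_{i=0}^{j}\langle X\rangle^{([i])}$ must eventually become Frobenius-invariant, so by Lemma~\ref{l:results}.3 it has an $\F_q$-basis; a column-rank count then pins the stabilised dimension to $s$, and a final monotonicity observation gives stabilisation at step $s-1$. You instead factor $X=VU$ right away, reduce the statement to the stacked matrix $\hat V_{s-1}$ having full column rank over $\F_{q^m}$, and derive a contradiction from a hypothetical kernel vector $\boldsymbol{c}$: the span $W$ of its Frobenius orbit is Frobenius-invariant, hence (again via Lemma~\ref{l:results}.3) contains a nonzero $\F_q$-vector, contradicting the $\F_q$-independence of the columns of $V$. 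Both proofs hinge on the same lemma, but you apply it to the orbit of a single kernel vector rather than to the whole growing sum; this makes the reduction to $L=s-1$ explicit and sidesteps the separate column-rank bookkeeping in the paper.

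One small point of care: your appeal to Lemma~\ref{l:moore}.1 for $\dim W=\rk(\boldsymbol{c})$ only covers the case of at most $\rk(\boldsymbol{c})$ rows, whereas you are taking $s\ge\rk(\boldsymbol{c})$ consecutive powers. The conclusion is still correct --- writing $\boldsymbol{c}=\boldsymbol{\gamma}B$ with $\boldsymbol{\gamma}\in\F_{q^m}^{r}$ of full rank $r=\rk(\boldsymbol{c})$ and $B\in\F_q^{r\times s}$, the first $r$ powers already span $\F_{q^m}^{r}\cdot B=\langle B^{T}\rangle$, so all $s\ge r$ powers give the same $r$-dimensional, manifestly Frobenius-invariant space --- but it would be worth spelling this out rather than pointing only to Lemma~\ref{l:moore}.1.
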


\begin{proof}
	The chain of subspaces 
	$$\langle X \rangle \subseteq \langle X\rangle + \langle X\rangle^{(q)}  \subseteq  \sum_{i=0}^{2} \langle X \rangle^{([i])}  \subseteq  \dots$$ must eventually stabilize. Let $\ell$ be such that, 
	$$\sum_{i=0}^{\ell-1} \langle X\rangle^{([i])} = \sum_{i=0}^{\ell} \langle X\rangle^{([i])}.$$ 
Define $s':=\dim\sum_{i=0}^{\ell-1} \langle X\rangle^{([i])} $. We have 
	$$\left(\sum_{i=0}^{\ell-1} \langle X\rangle^{([i])}\right)^{(q)} = \sum_{i=1}^{\ell} \langle X\rangle^{([i])} \subseteq \sum_{i=0}^{\ell}\langle X\rangle^{([i])} =  \langle X\rangle^{([\ell])} + \sum_{i=0}^{\ell-1} \langle X\rangle^{([i])}.$$ 
By Lemma \ref{l:moore}, $\dim\sum_{i=0}^{\ell-1} \langle X\rangle^{([i])} = \left(\dim\sum_{i=0}^{\ell-1} \langle X\rangle^{([i])}\right)^{(q)}  =s'$. Hence, we must have
	$$\left(\sum_{i=0}^{\ell-1} \langle X\rangle^{([i])}\right)^{(q)} =
\sum_{i=0}^{\ell-1} \langle X\rangle^{([i])},$$ 
	and therefore we can use the third point of Lemma \ref{l:results} and express the sum on the right as the row space of a matrix $U' \in \F_q^{s'\times n}$ of (column) rank $s'$. Thus there exists $S\in \GL_{k\ell}(\F_{q^m})$ such that 
$$ \left(\begin{array}{c} X \\ X^{([1])} \\ \vdots \\ X^{([\ell-1])}\end{array}\right) = S \left(\begin{array}{c} U'\\ 0 \end{array} \right).
$$
This implies that $\langle X\rangle \subseteq \langle U' \rangle$. It follows from Proposition \ref{p:vbundle} that $s'\geq s$. Moreover, by Lemma \ref{l:crk}, the above matrix on the left has column rank $s'$. Since, by the $\F_q$-linearity of the Frobenius, the column rank of this matrix is equal to the column rank of $X$ we get $s=s'$ and hence $ \supp_\Gr (X) = \langle U'\rangle$.
\end{proof}

Note that a matrix $X \in \F_{q^m}^n$ can always be decomposed into a Moore matrix component $X_{\textnormal{Moore}}$ and a non-Moore matrix component $Z$ as
\[ X = X_{\textnormal{Moore}} +Z   .\] 
\begin{definition}
 We will call such a decomposition a \emph{Moore decomposition}. There exists a Moore decomposition so that the non-Moore component has lowest possible column rank. In this case, we call the Moore decomposition a \emph{minimum column rank Moore decomposition}. 
\end{definition}

Proposition \ref{p:uniquesupport} shows that, regardless of the choice of Moore decomposition, the Grassmann support of a non-Moore matrix component of a minimum column rank  Moore decomposition is the same.

\begin{proposition}\label{p:uniquesupport} 
Suppose that $X\in \F_{q^{m}}^{k\times n}$ is a matrix which has minimum column rank Moore decomposition $X = A_{\textnormal{Moore}} + A$, where $A_{\textnormal{Moore}}$ is a Moore matrix, and $A$ has column rank $s$. Then, any other minimum column rank Moore decomposition $X = B_{\textnormal{Moore}} + B$ satisfies that $ \supp_\Gr (A) =  \supp_\Gr (B)$.
\end{proposition}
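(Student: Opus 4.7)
The plan is to argue by contradiction: assume $T_A := \supp_\Gr(A) \neq T_B := \supp_\Gr(B)$ and construct a Moore decomposition of $X$ whose non-Moore part has column rank strictly less than $s$. The key input is that $M := A - B = B_{\textnormal{Moore}} - A_{\textnormal{Moore}}$ is Moore by Lemma \ref{l:moore}(3); its Moore structure will force a ``Moore slab'' inside $A$ that can be absorbed into the Moore summand.

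Set $s' := \dim(T_A \cap T_B) < s$. Since $T_A$ and $T_B$ are Frobenius-invariant (they have $\F_q$-bases by the definition of Grassmann support), so is $T_A\cap T_B$, and hence it too has an $\F_q$-basis by Lemma \ref{l:results}(3). I would pick $\sigma\in\GL_n(\F_q)$ whose rows form an $\F_q$-basis of $\F_q^n$ ordered as an $\F_q$-basis of $T_A\cap T_B$, followed by an extension to $T_A$, followed by an extension to $T_B$, followed by a completion to $\F_q^n$ (lengths $s'$, $s-s'$, $s-s'$, $n-2s+s'$). Since the rows of $A$ lie in $T_A$ and the rows of $B$ lie in $T_B$, in the new coordinates $A\sigma^{-1}$ has nonzero columns only in positions $1,\dots,s$ and $B\sigma^{-1}$ has nonzero columns only in positions $\{1,\dots,s'\}\cup\{s+1,\dots,2s-s'\}$; in particular, on the middle block of columns $s'+1,\dots,s$ we have $M\sigma^{-1} = A\sigma^{-1}$.

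Now right multiplication by $\sigma^{-1}\in\GL_n(\F_q)$ preserves the Moore property, because the entrywise Frobenius commutes with matrix multiplication and fixes entries of $\F_q$; hence $M\sigma^{-1}$ is Moore, so every column of it has the Frobenius-sequence form $(\gamma_j,\gamma_j^{[1]},\dots,\gamma_j^{[k-1]})^T$. Define $Y\in\F_{q^m}^{k\times n}$ by declaring that $Y\sigma^{-1}$ agrees with $A\sigma^{-1}$ on columns $s'+1,\dots,s$ and vanishes elsewhere; every column of $Y\sigma^{-1}$ is then a Frobenius sequence (trivially so outside the middle block), so $Y\sigma^{-1}$, and hence $Y$, is Moore. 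Writing $A = Y + A'$, the matrix $A'\sigma^{-1}$ has nonzero columns only in positions $1,\dots,s'$, so $\crk(A')\leq s'$. Therefore $X = (A_{\textnormal{Moore}}+Y) + A'$ is a Moore decomposition (Lemma \ref{l:moore}(3)) whose non-Moore part has column rank at most $s'<s$, contradicting the minimality of $s$. Hence $T_A = T_B$.

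The main difficulty is setting up the coordinate change so that in the middle columns $s'+1,\dots,s$ the matrix $A\sigma^{-1}$ coincides with the Moore matrix $M\sigma^{-1}$; once this block structure is visible, the slab $Y$ can be carved off of $A$ and absorbed into the Moore summand immediately.
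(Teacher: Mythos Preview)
Your proof is correct. Both your argument and the paper's hinge on the same key observation, that $M := A-B = B_{\textnormal{Moore}}-A_{\textnormal{Moore}}$ is Moore, and both construct an auxiliary Moore decomposition of $X$ to invoke minimality; but the executions differ.

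The paper works with parity check matrices: applying a parity check $E$ for $\langle V\rangle = \supp_\Gr(B)$ to the equation $X = A_{\textnormal{Moore}}+A = B_{\textnormal{Moore}}+B$ shows $AE^T$ is Moore, from which one deduces $A = \bar{A} + \kappa'V$ with $\bar{A}$ the Moore matrix generated by the first row $A_1$ of $A$. The new decomposition $X = (A_{\textnormal{Moore}}+\bar{A}) + \kappa'V$ and minimality force $\crk(\kappa'V)=s$, hence $\supp_\Gr(\kappa'V)=\langle V\rangle$; a second parity check argument then gives $\langle V\rangle\subseteq\langle U\rangle$, and equality follows by dimension. Your approach instead uses a single $\F_q$-coordinate change adapted to $T_A\cap T_B\subset T_A,T_B$, reads off directly that the columns of $A\sigma^{-1}$ lying in the $T_A\setminus T_B$ block coincide with those of the Moore matrix $M\sigma^{-1}$, and carves this Moore slab $Y$ off of $A$ to produce a strictly smaller non-Moore remainder. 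Your argument is a clean contradiction and avoids juggling two parity checks; the paper's version is a direct proof and yields the slightly stronger intermediate fact that $A$ itself decomposes as a Moore matrix plus something supported in $\langle V\rangle$.
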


\begin{proof}
	Let $A$ have Grassmann support $U$, and $B$ have Grassmann support $V$. I.e., we can write $A = A'U$ and $B = B'V$ with $U,V\in\F_q^{s\times n}$ of full rank. Let $E \in \F_q^{(n-s)\times n}$ be a parity check matrix for $\langle V \rangle$. Then, 
	$$B_{\textnormal{Moore}}E^T = XE^T -BE^T = XE^T = A_{\textnormal{Moore}}E^T + AE^T,$$ 
	which yields $$(B_{\textnormal{Moore}}-A_{\textnormal{Moore}})E^T = AE^T.$$ 
	Since $E$ is a matrix over $\F_{q}$, $(B_{\textnormal{Moore}} - A_{\textnormal{Moore}})E^T$ is a Moore matrix, therefore the matrix $AE^T$
must be a Moore matrix as well. This gives that 
$(AE^T)_i = (A_1E^T)^{([i-1])}=A_1^{([i-1])}E^T$
 for $i = 2,  \dots , k$. Since $A$ itself is not necessarily a Moore matrix, row $i$ of $A$ must be of the form $A_i \in A_{1}^{([i-1])} + \ker(E),$ for $i = 1,  \dots , k$. Then, we can write 
$$A = \underbrace{\left(\begin{array}{c} A_1 \\ A_1^{([1])} \\ \vdots \\ A_1^{([k-1])} \end{array}\right)}_{\bar{A}} + \underbrace{\left(\begin{array}{c} \kappa_1 \\ \kappa_2 \\ \vdots \\ \kappa_k \end{array}\right)}_{\kappa'V},$$ 
for $\kappa_1,\dots,\kappa_k \in \ker(E) = \langle V\rangle$ and $\kappa' \in \F_{q^m}^{k\times s}$. If we let $F \in \F_q^{(n-s)\times n}$ be a parity check matrix for $\langle U\rangle$, then $AF^{T}=A'UF^T=0$ and hence in particular $A_{1}F^{T}=0$. Since $F$ is a matrix over $\F_{q}$, we also get $A_{1}^{([i])}F^{T}=0$ for $i=1,\dots,k-1$. Hence,
$$0 = AF^T = \bar{A}F^T + \kappa'VF^T = \kappa' VF^T.$$ 
Since $X=(A_{\textnormal{Moore}} + \bar{A}) + \kappa'V$ is also a Moore decomposition of $X$, then the column rank of $\kappa' V$ must be equal to $s$ and so $\langle V\rangle = \langle \kappa'V\rangle$. Thus, 
$$\langle V\rangle F^T = 0,$$ and therefore, $\langle V\rangle = \langle U \rangle$, so the Grassmann supports of $A$ and $B$ are the same.
\end{proof}

\begin{theorem}\label{t:fullsum}
	Let $M \in \F_{q^m}^{k\times n}$ be a Moore matrix and $X\in \F_{q^m}^{k\times n}$ be of column rank $s$, where $s$ is the rank of the non-Moore component in a minimum column rank Moore decomposition of $X$. Then, we have $$\sum_{i=0}^{s}\langle M+X\rangle^{([i])} = \sum_{i=0}^{s}\langle M\rangle^{([i])} + \supp_\Gr (X).$$
\end{theorem}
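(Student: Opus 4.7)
The plan is to prove the two inclusions separately, with the $\supseteq$ direction carrying all the substance. For the $\subseteq$ direction, I first note that $\langle M+X\rangle^{([i])}\subseteq\langle M\rangle^{([i])}+\langle X\rangle^{([i])}$ by the linearity of row span, so summing over $i=0,\dots,s$ and invoking Theorem \ref{t:frobchain}---which gives $\sum_{i=0}^s\langle X\rangle^{([i])}=\supp_{\Gr}(X)$ because $X$ has column rank $s$---yields the desired containment.

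Applying the same linear estimate with $M=(M+X)-X$ produces the symmetric inclusion $\sum_{i=0}^s\langle M\rangle^{([i])}\subseteq \sum_{i=0}^s\langle M+X\rangle^{([i])}+\supp_{\Gr}(X)$. Adding $\supp_{\Gr}(X)$ to both sides shows that the two sums appearing in the theorem agree modulo $\supp_{\Gr}(X)$; hence the theorem reduces to the single claim $\supp_{\Gr}(X)\subseteq \sum_{i=0}^s\langle M+X\rangle^{([i])}$.

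For this reduced claim, I would factor $X=VU$ via Corollary \ref{c:Gsupport}, where $U\in\F_q^{s\times n}$ realizes $\supp_{\Gr}(X)$ and $V\in\F_{q^m}^{k\times s}$. Letting $\tilde M$ and $\tilde V$ denote the Frobenius-stacked matrices with blocks $M^{([i])}$ and $V^{([i])}$ for $i=0,\dots,s$, the stacked matrix $N$ whose row span equals $\sum_{i=0}^s\langle M+X\rangle^{([i])}$ factors as
\[
  N \;=\; \bigl[\,I_{k(s+1)}\mid \tilde V\,\bigr]\begin{pmatrix}\tilde M\\ U\end{pmatrix}.
\]
Since $V$ itself has $\F_q$-column rank $s$, a second application of Theorem \ref{t:frobchain} (this time to $V$) gives $\sum_{i=0}^{s-1}\langle V\rangle^{([i])}=\F_{q^m}^s$, so the rows of $\tilde V$ span $\F_{q^m}^s$ over $\F_{q^m}$. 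The strategy is then to exhibit, for each $v\in\F_{q^m}^s$, a left-coefficient row vector $c\in\F_{q^m}^{k(s+1)}$ with $c\,\tilde M=0$ and $c\,\tilde V=v$; the corresponding combination $c\,N=(c\,\tilde V)\,U=vU$ then lies in $\langle N\rangle$, giving $\supp_{\Gr}(X)=\langle U\rangle\subseteq\langle N\rangle$ as required.

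The hard part will be to show that such a $c$ exists for every $v$---equivalently, that the image of the left null space of $\tilde M$ under $c\mapsto c\,\tilde V$ is all of $\F_{q^m}^s$. This is precisely where the minimum column rank Moore decomposition hypothesis enters: if this image were a proper subspace, one should be able to choose $v\in\F_{q^m}^s$ outside it and use Proposition \ref{p:uniquesupport} to construct a nontrivial Moore matrix $\beta U$ such that $X-\beta U$ has column rank strictly less than $s$, contradicting minimality. Making the passage from such a missing $v$ to the concrete Moore-reducing $\beta$ fully rigorous is the main technical obstacle of the proof.
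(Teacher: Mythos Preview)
Your reduction to showing $\supp_{\Gr}(X)\subseteq\sum_{i=0}^{s}\langle M+X\rangle^{([i])}$ is clean and correct, and the factorization $N=[I\mid\tilde V]\left(\begin{smallmatrix}\tilde M\\U\end{smallmatrix}\right)$ is a legitimate way to organize the problem. But you explicitly flag the crucial surjectivity step as ``the main technical obstacle'' and do not prove it, and this is the heart of the theorem, not a detail. Your suggestion to invoke Proposition~\ref{p:uniquesupport} and produce a Moore matrix ``$\beta U$'' with $\crk(X-\beta U)<s$ is too vague to constitute a plan: Proposition~\ref{p:uniquesupport} only tells you that minimum column rank Moore decompositions share a Grassmann support, and it is unclear what shape $\beta$ should have or how a missing $v$ would determine it.

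The paper closes exactly this gap by an explicit row reduction rather than an abstract surjectivity argument. Writing $X_{j}=v_{j}U$, it forms the $(k-1)\times n$ matrix $X'$ with rows $X_{j}^{([1])}-X_{j+1}=(v_{j}^{q}-v_{j+1})U$; after row operations the stacked matrix acquires the lower block $X',(X')^{([1])},\dots,(X')^{([s-1])}$. The key lemma is that $\crk(X')=s$: if not, one writes $X=X_{\mathrm{Moore}}+X''$ with $X_{\mathrm{Moore}}$ the Moore matrix on first row $X_{1}$, and a short telescoping computation shows every row of $X''$ lies in $\supp_{\Gr}(X')$, contradicting minimality of $s$. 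Once $\crk(X')=s$, Theorem~\ref{t:frobchain} gives that the lower block already spans $\langle U\rangle$, and the residual $X^{*}$ terms cancel against it.

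In fact your abstract surjectivity claim, in the generic case where the only left relations on $\tilde M$ are the obvious coincidences among the $\alpha^{([i])}$, is \emph{equivalent} to $\crk(X')=s$: the image of the left null space under $c\mapsto c\tilde V$ is then exactly $\sum_{i=0}^{s-1}\langle W\rangle^{([i])}$ with $W$ the $(k-1)\times s$ matrix of rows $v_{j}^{q}-v_{j+1}$, and by Theorem~\ref{t:frobchain} this equals $\F_{q^m}^{s}$ iff $\crk(W)=s$ iff $\crk(X')=s$. So your route can be completed, but the missing ingredient \emph{is} the paper's $X''$ construction, not something Proposition~\ref{p:uniquesupport} provides.
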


\begin{proof}
Let $U \in \F_q^{s\times n}$ be the Grassmann support of $X$.  
Moreover, let $X_{i}, M_{i}$ denote the $i$th row of $X$ and $M$ respectively, and let
$$X' = \left(\begin{array}{c} X_1^{([1])}-X_2 \\ X_2^{([1])}-X_3 \\ \vdots \\ X_{k-1}^{([1])} -X_{k}\end{array}\right), 
M^{*} = \left(\begin{array}{c} M_{1}\\M_1^{([1])} \\  \vdots \\ M_{1}^{([k+s-1])} \end{array}\right),
X^{*} = \left(\begin{array}{c} X_{1}\\X_1^{([1])} \\ \vdots \\ X_{1}^{([s-1])}\\\hline  X_{1}^{([s])}\\ X_{2}^{([s])} \\ \vdots\\ X_{k}^{([s])}\end{array}\right).$$
	Then the space 
	$\sum_{i=0}^{s}\langle M+X\rangle^{([i])}$
	is generated by the row span of 
	\begin{equation}\label{eq:expmatrix} \left(\begin{array}{c} M+X \\ (M+X)^{([1])} \\ \vdots \\ (M+X)^{([s])}\end{array}\right) = \tilde{S}\left(\begin{array}{c} M^* + X^* \\ X' \\ \vdots \\ (X')^{([s-1])}\end{array}\right),\end{equation} 
	for a suitable row transformation matrix $\tilde{S}$.
Since $U\in \F_q^{s\times n}$ we have $U^{([i])} = U$ for $i\geq 0$. It follows that the rows of  $X'$ are  elements of $\langle U\rangle$, which implies that the Grassmann support $\langle U'\rangle$ of $X'$ is a subspace of $\langle U\rangle$ and hence that $X'$ has column rank $s' \leq s$. By Theorem \ref{t:frobchain}, $$\sum_{i=0}^{s'-1} \langle X'\rangle^{([i])} = \sum_{i=0}^{s-1} \langle X'\rangle^{([i])} = \langle U' \rangle \subseteq \langle U\rangle .$$ 

We now want to show that $\langle U' \rangle = \langle U\rangle $. Suppose for the sake of contradiction that the rank of $U'$ is strictly smaller than $s$. We write $X$ as a Moore decomposition 
$$X = \left(\begin{array}{c} X_1 \\ X_2 \\ \vdots \\ X_{k}\end{array}\right) = \left(\begin{array}{c} X_1 \\ X_1^{([1])} \\ \vdots \\ X_1^{([k-1])}\end{array}\right) + \underbrace{\left(\begin{array}{c} 0 \\ X_2-X_1^{([1])} \\ \vdots \\ X_{k}-X_1^{([k-1])}\end{array}\right)}_{X''}.$$ 
We note that $X_{i+1}-X_i^{[1]} \in \langle U'\rangle$ for $i= 1, ...,k-1$. Starting from the first non-zero row of $X''$, it follows that
$$(X_2 - X_1^{([1])})^{([1])} = X_2^{([1])} - X_1^{([2])} \in \langle U'\rangle $$
which implies 
\begin{align*} & X_2^{([1])} - X_1^{([2])} - (X_2^{([1])} - X_3) \in \langle U'\rangle \\ 
\iff \quad & X_3 - X_1^{([2])} \in \langle U'\rangle.\end{align*} We recognize this as the second non-zero row of $X''$. Continuing in this fashion, we can obtain that every row of $X''$ must belong to $U'$. Hence, $X''$ has column rank  at most $ s'<s$. 
However, this contradicts 
the fact that the minimal column rank Moore decomposition has non-Moore part with column rank $s$. Therefore,  by Proposition \ref{p:uniquesupport}, $U'$ has rank $s$ and we have $\langle U'\rangle = \langle U\rangle$. 

Hence, we have shown that the row space of  the second matrix in \eqref{eq:expmatrix} is equal to the row space of
$$\left(\begin{array}{c} M^* + X^* \\ U\end{array}\right)$$
which is in turn equal to the row space of
$$ \left(\begin{array}{c} M^* \\ U\end{array}\right),$$ 
because we can cancel $X^*$ by taking suitable  elements of $\langle U\rangle$, since $\langle X^{*}\rangle\subseteq \langle U'\rangle = \langle U\rangle$. This implies the statement.
\end{proof} 

\begin{lemma}\label{l:supportcontain}
	Let $X \in \F_{q^m}^{k\times n}$ have minimum column rank Moore decomposition, $X = X_{\textnormal{Moore}} + Z.$ Then, $$  \supp_{\textnormal{Gr}}(X_{\textnormal{Moore}}) + \supp_{\textnormal{Gr}}(Z) = \supp_{\textnormal{Gr}}(X).$$
	In particular, $\crk(X_{\textnormal{Moore}}) \leq \crk(X)$.
\end{lemma}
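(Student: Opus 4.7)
The plan is to reduce the claim to Theorem \ref{t:fullsum} applied with the Moore matrix $X_{\textnormal{Moore}}$ and auxiliary matrix $Z$, and then upgrade the resulting finite Frobenius-sum identity to an equality of Grassmann supports by passing to Frobenius closure.

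Before invoking the theorem, the key preliminary, and the step I expect to be the main subtlety, is to verify that $Z$ itself satisfies the hypothesis of Theorem \ref{t:fullsum}: namely, the non-Moore component in a minimum column rank Moore decomposition of $Z$ must have column rank equal to $s := \crk(Z)$. This is a short contradiction argument. If $Z = Z' + Z''$ were a Moore decomposition of $Z$ with $Z'$ Moore and $\crk(Z'') < s$, then $X = (X_{\textnormal{Moore}} + Z') + Z''$ would be a Moore decomposition of $X$ whose non-Moore part has column rank strictly less than $s = \crk(Z)$, contradicting the minimality of the decomposition $X = X_{\textnormal{Moore}} + Z$.

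With this in hand, Theorem \ref{t:fullsum} yields
\[\sum_{i=0}^{s}\langle X\rangle^{([i])} = \sum_{i=0}^{s}\langle X_{\textnormal{Moore}}\rangle^{([i])} + \supp_{\Gr}(Z).\]
To pass from this to the claimed identity of Grassmann supports, I would apply the Frobenius closure $\sum_{j\geq 0}(\cdot)^{([j])}$ to both sides. This operation is additive on sums of subspaces, and by Lemma \ref{l:results}(3), $\supp_\Gr(Z)$ is already Frobenius invariant, so it contributes itself unchanged. By Theorem \ref{t:frobchain}, the closure of $\sum_{i=0}^{s}\langle X\rangle^{([i])}$ is $\supp_\Gr(X)$, and the closure of $\sum_{i=0}^{s}\langle X_{\textnormal{Moore}}\rangle^{([i])}$ is $\supp_\Gr(X_{\textnormal{Moore}})$. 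Equating the two closures gives the desired identity $\supp_\Gr(X) = \supp_\Gr(X_{\textnormal{Moore}}) + \supp_\Gr(Z)$. Taking dimensions on both sides, and using that $\crk$ coincides with $\dim \supp_\Gr$ (by Corollary \ref{c:Gsupport} and the definition of Grassmann support), immediately yields $\crk(X_{\textnormal{Moore}}) \leq \crk(X)$.
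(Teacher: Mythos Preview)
Your proof is correct and follows essentially the same route as the paper's: both reduce to Theorem~\ref{t:fullsum} (applied with $M = X_{\textnormal{Moore}}$ and auxiliary matrix $Z$) combined with Theorem~\ref{t:frobchain} to identify the Frobenius sums with Grassmann supports. The only cosmetic difference is that the paper chooses a single index $\ell = \max(\crk(X),\crk(X_{\textnormal{Moore}}))$ large enough that the finite sums $\sum_{i=0}^{\ell}\langle \cdot\rangle^{([i])}$ already coincide with the respective Grassmann supports, whereas you apply Theorem~\ref{t:fullsum} at level $s$ and then pass to the Frobenius closure; these are equivalent maneuvers. Your explicit verification that $Z$ itself satisfies the hypothesis of Theorem~\ref{t:fullsum} (via the contradiction with minimality of the decomposition of $X$) is a point the paper leaves implicit, so in that respect your argument is slightly more complete.
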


\begin{proof}
Define $\ell:=\max(\crk(X), \crk(X_{\textnormal{Moore}}))$. 
	Using Theorems \ref{t:frobchain} and \ref{t:fullsum}, we have 
	\begin{align*} \supp_{\textnormal{Gr}}(X) &= \sum_{i=0}^{\ell} \langle X\rangle^{([i])} \\ 
&= \sum_{i=0}^{\ell}\langle X_{\textnormal{Moore}} + Z\rangle^{([i])} \\ 
&= \sum_{i=0}^{\ell}\langle X_{\textnormal{Moore}}\rangle^{([i])} + \supp_{\textnormal{Gr}}(Z)\\
&=  \supp_{\textnormal{Gr}}(X_{\textnormal{Moore}}) + \supp_{\textnormal{Gr}}(Z) . 
\end{align*} 
\end{proof}

\begin{corollary}\label{c:rankone}
	Let $M \in \F_{q^m}^{k\times N}$ be a Moore matrix and $X$ be of column rank $t$ with minimum column rank Moore decomposition $X = X_{\textnormal{Moore}} + Z$, where $\crk(Z) = s$. Suppose that $d_{\textnormal{min}}^{\textnormal{R}}(\langle M\rangle) \geq s+t+2$. Then, all elements of rank one in $$\sum_{i=0}^{s}\langle M+X\rangle^{([i])},$$ belong to $\supp_{\textnormal{Gr}}(X)$. Moreover, if $s=t$, the elements of rank one exactly span $\supp_{\textnormal{Gr}}(X) = \supp_{\textnormal{Gr}}(Z)$. 
\end{corollary}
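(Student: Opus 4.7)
The idea is to combine the identity from Theorem~\ref{t:fullsum} with the minimum-distance bound for iterated Frobenius sums of Moore matrices. First I would apply Theorem~\ref{t:fullsum} to obtain
\[\sum_{i=0}^{s}\langle M+X\rangle^{([i])} \;=\; \mathcal{M} + \supp_{\Gr}(X),\]
where $\mathcal{M} := \sum_{i=0}^{s}\langle M\rangle^{([i])}$ is the row span of the $(k+s)$-row Moore matrix obtained by stacking the Frobenius powers of $M$ (note $k+s$ does not exceed the length because of the minimum-distance hypothesis). By Lemma~\ref{l:moore}(1), or equivalently by iterating Lemma~\ref{l:moore}(4) $s$ times, $\mathcal{M}$ has minimum rank distance $d_{\textnormal{min}}^{\textnormal{R}}(\langle M\rangle)-s \geq (s+t+2)-s = t+2$.

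Next, given a rank-one vector $\boldsymbol{x} \in \sum_{i=0}^{s}\langle M+X\rangle^{([i])}$, I would decompose it as $\boldsymbol{x}=\boldsymbol{m}+\boldsymbol{u}$ with $\boldsymbol{m}\in\mathcal{M}$ and $\boldsymbol{u}\in\supp_{\Gr}(X)=\langle U\rangle$ for some $U\in\F_{q}^{t\times n}$. The key observation is that $\rk(\boldsymbol{u})\leq t$: writing $\boldsymbol{u}=\boldsymbol{c}U$ with $\boldsymbol{c}\in\F_{q^{m}}^{t}$, every entry of $\boldsymbol{u}$ is an $\F_{q}$-linear combination of $c_{1},\ldots,c_{t}$, so the $\F_{q}$-span of the coordinates of $\boldsymbol{u}$ has dimension at most $t$. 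The triangle inequality in the rank metric then yields $\rk(\boldsymbol{m})\leq \rk(\boldsymbol{x})+\rk(\boldsymbol{u}) \leq 1+t<t+2$, which, by the preceding minimum-distance bound on $\mathcal{M}$, forces $\boldsymbol{m}=0$. Hence $\boldsymbol{x}=\boldsymbol{u}\in\supp_{\Gr}(X)$, proving the first claim.

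For the \emph{moreover} clause, assume $s=t$. Lemma~\ref{l:supportcontain} gives $\supp_{\Gr}(Z)\subseteq\supp_{\Gr}(X)$, and since both have $\F_{q^{m}}$-dimension $t$ they coincide. The rows of $U\in\F_{q}^{t\times n}$ are nonzero vectors over $\F_{q}$, hence rank-one elements of $\F_{q^{m}}^{n}$; they lie inside $\supp_{\Gr}(X)\subseteq\sum_{i=0}^{s}\langle M+X\rangle^{([i])}$ and $\F_{q^{m}}$-span $\supp_{\Gr}(X)$. Combined with the first claim, which confines \emph{all} rank-one elements of the sum to $\supp_{\Gr}(X)$, this shows that the rank-one elements span exactly $\supp_{\Gr}(X)=\supp_{\Gr}(Z)$.

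The only delicate point is the bound $\rk(\boldsymbol{u})\leq t$, which relies crucially on the Grassmann support being representable by a matrix over the base field $\F_{q}$; once this is in hand the rest is a direct assembly of Theorem~\ref{t:fullsum} together with Lemmas~\ref{l:moore} and~\ref{l:supportcontain}.
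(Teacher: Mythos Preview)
Your overall strategy is sound, but the first step misapplies Theorem~\ref{t:fullsum}. That theorem carries the hypothesis that $\crk(X)$ equals the column rank $s$ of the non-Moore part of its minimum Moore decomposition; here $\crk(X)=t$ while $\crk(Z)=s$, and these may differ. When $s<t$ the equality you claim is in fact false: for instance if $s=0$ (so $X$ is itself a Moore matrix) your left side is just $\langle M+X\rangle$, of dimension $k$, while your right side $\langle M\rangle+\supp_{\Gr}(X)$ can have dimension up to $k+t$. What Theorem~\ref{t:fullsum} actually gives, applied to the Moore matrix $M+X_{\textnormal{Moore}}$ and the matrix $Z$, is
\[
\sum_{i=0}^{s}\langle M+X\rangle^{([i])}=\sum_{i=0}^{s}\langle M+X_{\textnormal{Moore}}\rangle^{([i])}+\supp_{\Gr}(Z),
\]
which is not your displayed identity.

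Fortunately your argument only needs the \emph{inclusion} $\sum_{i=0}^{s}\langle M+X\rangle^{([i])}\subseteq\mathcal{M}+\supp_{\Gr}(X)$, and this is elementary: $\langle M+X\rangle^{([i])}\subseteq\langle M\rangle^{([i])}+\langle X\rangle^{([i])}$, and $\langle X\rangle^{([i])}\subseteq\supp_{\Gr}(X)^{([i])}=\supp_{\Gr}(X)$ by Lemma~\ref{lem:incl} and the Frobenius-invariance of a space defined over $\F_q$. With this inclusion in hand, your decomposition $\boldsymbol{x}=\boldsymbol{m}+\boldsymbol{u}$ and the triangle-inequality bound go through. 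The paper takes a slightly different route for this step: it applies a parity check matrix $H\in\F_q^{(n-t)\times n}$ for $\supp_{\Gr}(X)$ to the whole sum, observes that $XH^{T}=0$ so the image is $\sum_{i}\langle M\rangle^{([i])}H^{T}$ with minimum rank distance $\geq 2$ (Lemma~\ref{l:moore}), and concludes that any rank-one $\boldsymbol{x}$ satisfies $\boldsymbol{x}H^{T}=0$. This avoids invoking Theorem~\ref{t:fullsum} for the first claim altogether. Your ``moreover'' argument is fine, since there $s=t$ and Theorem~\ref{t:fullsum} applies directly.
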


\begin{proof}
	Let $\mathcal{U}$ be the subspace spanned by all elements of rank one in $$\sum_{i=0}^{s}\langle M+X\rangle^{([i])}.$$ From Lemma \ref{l:supportcontain}, if $X = X_{\textnormal{Moore}} + Z$, is a minimum column rank decomposition, then we know that $\supp_{\Gr}(Z) \subseteq \supp_{\Gr}(X).$ Let $H \in \F_q^{(n-t)\times n}$ be parity check matrix for $\supp_{\textnormal{Gr}}(X)$. From Lemma \ref{l:moore}, we have \begin{align*} d_{\textnormal{min}}^{\textnormal{R}} \left(\sum_{i=0}^{s}\langle M+X\rangle^{([i])}H^T\right) &= d_{\textnormal{min}}^{\textnormal{R}} \left(\sum_{i=0}^{s}\langle M\rangle^{([i])}H^T\right) \\ &\geq (s+t+2)-s-t \\ &= 2.\end{align*} 
	Since $H$ is a matrix over $\F_q$, we get $\wt_{\textnormal{R}}(\boldsymbol{x}) \leq \wt_{\textnormal{R}}(\boldsymbol{x}H)$, and therefore we must have that $\mathcal{U}\subseteq \supp_{\Gr}(X)$. By Theorem \ref{t:fullsum}, $\supp_{\Gr}(Z) \subseteq \mathcal{U}$ and if $s=t$ then $\supp_{\Gr}(Z) = \supp_{\Gr}(X)$. Therefore  we have $$\supp_{\Gr}(Z) = \mathcal{U} = \supp_{\Gr}(X).$$
\end{proof}

To set up our attack in Section \ref{s:attack}, we need to find the 
elements of rank one in a linear rank metric code efficiently. To accomplish this, we only need to find the codewords that have all coordinates in $\F_q$ (all other rank one codewords are multiples of these). 
The following lemma shows how these codewords in $\F_{q}^n$ can be computed.
\begin{lemma}\label{l:rankone}
Let $G\in \F_{q^m}^{k\times n}$ be in reduced row echelon form and denote by $G_i$  the $i$-th row of $G$. Then the solutions to 
\begin{equation}\label{eq10} 
\sum_{i=1}^{k} a_i (G_i^{([1])}-G_i) = 0,\end{equation} 
for variables $a_i \in \F_q$, represent the codewords of $\langle G \rangle$ in $\F_q^n$.
\end{lemma}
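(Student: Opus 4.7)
The plan is to show a two-way correspondence between $\F_q$-solutions of equation~(\ref{eq10}) and codewords of $\langle G \rangle$ that happen to lie in $\F_q^n$. The key observation I would exploit is that a vector $\boldsymbol{c} \in \F_{q^m}^n$ has all its coordinates in $\F_q$ if and only if $\boldsymbol{c}^{([1])} = \boldsymbol{c}$, since the Frobenius map fixes $\F_q$ pointwise and nothing else.

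First I would handle the easy direction: given $a_1,\dots,a_k \in \F_q$ satisfying (\ref{eq10}), set $\boldsymbol{c} = \sum_{i=1}^k a_i G_i \in \langle G\rangle$. Because each $a_i \in \F_q$ we have $a_i^{[1]} = a_i$, and the $\F_q$-linearity of the Frobenius then gives
\[
\boldsymbol{c}^{([1])} - \boldsymbol{c} = \sum_{i=1}^k a_i G_i^{([1])} - \sum_{i=1}^k a_i G_i = \sum_{i=1}^k a_i (G_i^{([1])} - G_i) = 0,
\]
so $\boldsymbol{c} \in \F_q^n$.

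For the converse I need to show that every codeword $\boldsymbol{c} \in \langle G \rangle \cap \F_q^n$ arises this way, and here the reduced row echelon hypothesis is essential. Writing $\boldsymbol{c} = \sum_{i=1}^k a_i G_i$ uniquely with $a_i \in \F_{q^m}$, the RREF structure means that the coordinate of $\boldsymbol{c}$ at the pivot column of row $i$ equals $a_i$. Since $\boldsymbol{c} \in \F_q^n$, each such coordinate lies in $\F_q$, forcing $a_i \in \F_q$. Then $\boldsymbol{c}^{([1])} = \boldsymbol{c}$ rewrites as $\sum_i a_i^{[1]} G_i^{([1])} = \sum_i a_i G_i$, which (using $a_i^{[1]} = a_i$) is exactly (\ref{eq10}). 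The only subtle point here is ensuring that the correspondence between $\F_q$-solutions and $\F_q$-codewords is actually a bijection rather than just a surjection, but this follows because the $a_i$ uniquely determine $\boldsymbol{c}$ through the independence of the rows of $G$, so no obstacle should arise beyond noting the RREF fact.
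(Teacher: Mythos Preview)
Your proposal is correct and follows essentially the same approach as the paper: both arguments hinge on the pivot structure of the reduced row echelon form to force the coefficients $a_i$ into $\F_q$, and then use the Frobenius fixed-point characterization $\boldsymbol{v}\in\F_q^n \iff \boldsymbol{v}^{([1])}-\boldsymbol{v}=0$ to translate membership in $\F_q^n$ into equation~(\ref{eq10}). Your version is slightly more explicit in separating the two directions and in noting the bijectivity, but the underlying ideas are identical.
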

\begin{proof}
Any codeword can be written as an $\F_{q^{m}}$-linear combination of the rows of $G$. Since all rows of $G$ have their pivot equal to $1$, a codeword with entries only in $\F_{q}$ needs to be an $\F_{q}$-linear combination of the rows. Thus, we get that any codeword in $ \F_q^n$ can be written as
$\sum_{i=1}^{k}a_i G_i$
for some $a_{i}\in \F_{q}$. Furthermore, we know that 
$$\boldsymbol{v} \in \F_{q}^{n} \iff \boldsymbol{v}^{([1])} - \boldsymbol{v} = 0 ,$$
hence 
$$\sum_{i=1}^{k}a_i G_i \in \F_q^n \iff \sum_{i=1}^{k} a_i (G_i^{([1])}-G_i) = 0 .$$
\end{proof}

When expanded over $\F_q$, Equation (\ref{eq10}) gives rise to a linear system of equations with $k$ variables, which can efficiently be solved with standard methods.


\section{Cryptanalysis of the GPT Cryptosystem}\label{s:attack}

In this section we explain our new attack to break the GPT cryptosystem, as defined in Subsection \ref{ss:GPT}. Our attack extends Overbeck's attack to cryptanalyze the system for all parameters. In Section \ref{s:ext}, we show how this same idea can be used to cryptanalyze the GGPT variant. 

Recall that the public key generator matrix is of the form
$$G_{\textnormal{pub}} := SG+X \in \F_{q^m}^{k\times n},$$
where $G$ is a generator matrix of a Gabidulin code $\Gab_{n,k}(\boldsymbol{\alpha})$, $X \in \F_{q^m}^{k\times n}$ is a matrix of column rank $t$, and $S \in \GL_k(\F_{q^m})$.

Note that, as an attacker, we do not have a priori knowledge of the parameter $s$ (the column rank of the non-Moore part in the minimal column rank Moore decomposition of $X$). We can generally assume $s = t$, or else start with $s=1$ and increase the value up to $t$ until the attack succeeds. 

\begin{theorem}\label{t:break}
Consider a GPT cryptosystem as defined in Subsection \ref{ss:GPT},  with public key generator matrix $G_{\textnormal{pub}} = SG + X \in \F_{q^m}^{k\times n}$. Let $S^{-1}X = X_{\textnormal{Moore}} + Z$ be a minimal column rank Moore decomposition with $s=\mathrm{colrk}(Z)$. Suppose an adversary can find a full rank matrix $U \in \F_q^{s'\times n}$ for $s \leq s' \leq t$ satisfying 
$$\supp_{\Gr}(Z) \subseteq \langle U\rangle\subseteq \supp_{\Gr}(X),$$ 
then an encrypted message from a public key of the form \eqref{eq:kpub} can be recovered in polynomial time.
\end{theorem}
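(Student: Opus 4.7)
The plan is to use $U$ to annihilate the non-Moore part $Z$ of the hidden error, reducing $G_{\textnormal{pub}}$ to a scrambled generator matrix of a Moore (Gabidulin-like) code which can then be decoded by the procedure of Subsection \ref{ss:decoding}.

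First, I would compute a full-rank parity-check matrix $H \in \F_q^{(n-s')\times n}$ of $\langle U\rangle$, so that $UH^T = 0$. Since $\supp_{\Gr}(Z) \subseteq \langle U\rangle$, Corollary \ref{c:Gsupport} lets me write $Z = Z' U_Z$ for a basis $U_Z$ of $\supp_{\Gr}(Z)$; as $\langle U_Z\rangle \subseteq \langle U\rangle$ we get $U_Z H^T = 0$ and hence $Z H^T = 0$. Writing $X = S(X_{\textnormal{Moore}}+Z)$, this gives
\[
G_{\textnormal{pub}}H^T \;=\; S(G + X_{\textnormal{Moore}})H^T + SZH^T \;=\; S\,MH^T,
\]
where $M := G + X_{\textnormal{Moore}}$ is a Moore matrix by Lemma \ref{l:moore}.3. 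Since $H^T$ has entries in $\F_q$, the product $MH^T$ is still a Moore matrix, with generator $\boldsymbol{\alpha}' := (\boldsymbol{\alpha}+\boldsymbol{x})H^T$, where $\boldsymbol{x}$ is the first row of $X_{\textnormal{Moore}}$.

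Next, I would bound $\rk(\boldsymbol{\alpha}')$. Because $\boldsymbol{\alpha}$ is $\F_q$-linearly independent with $n$ coordinates and $H$ has $\F_q$-rank $n - s'$, one checks that $\rk(\boldsymbol{\alpha}H^T) = n - s'$. Also $\rk(\boldsymbol{x}H^T) \leq \rk(\boldsymbol{x}) = \crk(X_{\textnormal{Moore}}) \leq \crk(X) = t$ by Lemma \ref{l:supportcontain}. Hence
\[
\rk(\boldsymbol{\alpha}') \;\geq\; (n-s') - t \;\geq\; n - 2t \;>\; k,
\]
where the last step uses $s' \leq t$ and the fact that $2t < 2t' \leq n-k$ (the underlying Gabidulin code corrects $t' > t$ errors). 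Applying Lemma \ref{l:moore}.1 to $MH^T$, I conclude that $\langle MH^T\rangle$ has dimension $k$ and minimum rank distance at least $\rk(\boldsymbol{\alpha}') - k + 1 \geq (n-k) - s' - t + 1 \geq 2(t'-t) + 1$, so any error of rank $\leq t'-t$ is uniquely correctable in this code.

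Finally, given a ciphertext $\boldsymbol{c} = \boldsymbol{m}G_{\textnormal{pub}} + \boldsymbol{e}$ with $\rk(\boldsymbol{e}) \leq t'-t$, I compute
\[
\boldsymbol{c}H^T \;=\; \boldsymbol{m}\,SMH^T + \boldsymbol{e}H^T,
\]
noting that $\rk(\boldsymbol{e}H^T) \leq \rk(\boldsymbol{e}) \leq t'-t$ because $H^T$ has entries in $\F_q$. I then apply the procedure of Subsection \ref{ss:decoding} with $G_{\textnormal{pub}}H^T = SMH^T$ playing the role of the scrambled generator $SG$. The Frobenius-intersection chain works as in that subsection: since $\rk(\boldsymbol{\alpha}') > k$, Lemma \ref{l:moore}.2 guarantees that each step $\langle MH^T\rangle^{([i])} \cap \langle MH^T\rangle^{([i+1])}$ drops dimension by exactly one, so iterating yields a nonzero multiple of $\boldsymbol{\alpha}'^{([k-1])}$ and hence, after $m-k+1$ Frobenius applications, a Moore-form generator of $\langle MH^T\rangle$ together with the change of basis from $SMH^T$. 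A standard Gabidulin decoder then produces $\boldsymbol{m}$. All of these operations are polynomial time in the input size.

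The main technical obstacle is the rank and minimum-distance accounting for $\langle MH^T\rangle$: everything hinges on the single inequality $2t < n-k$ coming from the error-correcting capability of the underlying Gabidulin code, combined with the mild hypothesis $s' \leq t$. Once $ZH^T = 0$ is noted, the reduction to an instance that Subsection \ref{ss:decoding} already handles is essentially immediate.
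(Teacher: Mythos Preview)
Your proof is correct and follows essentially the same route as the paper: take a parity-check matrix $H$ for $\langle U\rangle$ over $\F_q$, observe that $ZH^T=0$ so $G_{\textnormal{pub}}H^T=S(G+X_{\textnormal{Moore}})H^T$ is a scrambled Moore matrix, bound its minimum rank distance by $n-k+1-(t+s')$, and decode via Subsection~\ref{ss:decoding}. The paper obtains that distance bound via Lemma~\ref{l:moore}.3 and~\ref{l:moore}.5, whereas you bound $\rk(\boldsymbol{\alpha}')$ directly; these are equivalent computations. One small point: after recovering $\boldsymbol{\alpha}'$ (up to scalar), the code $\langle MH^T\rangle$ has length $n-s'$ but $\rk(\boldsymbol{\alpha}')$ may be strictly smaller, so it is not literally a Gabidulin code and a ``standard Gabidulin decoder'' does not apply as stated---the paper handles this by explicitly restricting to $n-(t+s')$ $\F_q$-independent columns of $MH^T$ before decoding, which you should mention as well.
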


\begin{proof}
	Let $H \in \F_q^{(n-s') \times n}$ be a parity check matrix for $\langle U \rangle$. Applying $H$ to the public key generator matrix yields 
	$$G_{\textnormal{pub}}H^T = (SG+X)H^T = S(G+X_{\textnormal{Moore}})H^T .$$
From Lemma \ref{l:supportcontain} we know that  $\mathrm{colrk}(X_{\textnormal{Moore}}) \leq t$. Then, 
from Lemma \ref{l:moore}, it follows that $\langle G+X_{\textnormal{Moore}} \rangle$ has minimum rank distance at least $n-k+1-t$, and that $\langle G+X_{\textnormal{Moore}} H^T\rangle$ has minimum rank distance at least $n-k+1-(t+s')$. Moreover, $GH^T+X_{\textnormal{Moore}}H^T$ is a Moore matrix. 
	
	 From the minimum distance we know that there are $n-(t+s')$ independent columns in this matrix, which generate a Gabidulin code of minimum distance $n-(t+s')-k+1$, $\Gab_{n-(t+s'),k}(\boldsymbol{\gamma})$, for some $\boldsymbol{\gamma} \in \F_{q^m}^{n-(t+s')}$. From Subsection \ref{ss:decoding}, we can recover a decoding algorithm for $\Gab_{n-(t+s'),k}(\boldsymbol{\gamma})$ with respect to the submatrix formed by these $n-(t+s')$ columns. The error correction capability of $\Gab_{n-(t+s'),k}(\boldsymbol{\gamma})$ is 
	 $$\left\lfloor\dfrac{n-(t+s')-k}{2}\right\rfloor = \left\lfloor t' - \dfrac{t+s'}{2}\right\rfloor \geq t'-t \geq \rk(\boldsymbol{e})\geq \rk(\boldsymbol{e}H^T),$$ 
	 where the last inequality follows from the fact that $H$ is a matrix over $\F_q$. 
	For an encrypted message $\boldsymbol{m}(SG+X) + \boldsymbol{e}$, we have 
	 $$(\boldsymbol{m}(SG+X) + \boldsymbol{e})H^T = \boldsymbol{m}S(GH^T+X_{\textnormal{Moore}}H^T) + \boldsymbol{e}H^T.$$ 
	 When we restrict this to the above chosen independent columns, we can uniquely decode in the respective code $\Gab_{n-(t+s') ,k}(\boldsymbol{\gamma})$ and can therefore recover $\boldsymbol{m}$.
\end{proof}



We can now use the previous result to attack and break the GPT cryptosystem.

\begin{corollary}\label{c:break}
Consider a GPT cryptosystem as defined in Subsection \ref{ss:GPT} with public key generator matrix $G_{\textnormal{pub}} = SG + X \in \F_{q^m}^{k\times n}$. For any such cryptosystem, an encrypted message can be recovered in polynomial time.
\end{corollary}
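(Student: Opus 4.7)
The plan is to apply Theorem \ref{t:break} by constructing, in polynomial time from $G_{\textnormal{pub}}$, a full-rank matrix $U \in \F_q^{s' \times n}$ with $\supp_{\Gr}(Z) \subseteq \langle U \rangle \subseteq \supp_{\Gr}(X)$, where $S^{-1}X = X_{\textnormal{Moore}} + Z$ is the minimum column-rank Moore decomposition and $s := \crk(Z) \le t$. The parameter $t$ is public information: $n$ and $k$ are read from $G_{\textnormal{pub}}$, $t' = \lfloor (n-k)/2 \rfloor$ is determined, and $t'-t$ is part of $\kappa_{\textnormal{pub}}$.

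I would extract $U$ from the rank-one codewords of the Frobenius extension
$$V := \sum_{i=0}^{t} \langle G_{\textnormal{pub}} \rangle^{([i])} = \sum_{i=0}^{t} \langle G + S^{-1}X \rangle^{([i])},$$
which is computable from the public key by stacking Frobenius powers and row-reducing. Applying Corollary \ref{c:rankone} with Moore matrix $M = G$ and with $S^{-1}X$ (of column rank $t$) in the role of ``$X$'', but using the known summation index $t$ in place of the unknown $s$, the proof's punctured-distance argument requires only that $d_{\min}^{\textnormal{R}}(\langle G\rangle) = n - k + 1 \ge 2t + 2$. The GPT constraint $t < t' = \lfloor(n-k)/2\rfloor$ yields $2t + 2 \le 2t' \le n - k$, so the bound holds strictly. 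Hence every rank-one vector of $V$ lies in $\supp_{\Gr}(S^{-1}X) = \supp_{\Gr}(X)$, where the equality is because the factorisation $X = VU$ with $U \in \F_q^{t\times n}$ from Corollary \ref{c:Gsupport} gives $S^{-1}X = (S^{-1}V)U$ with the same right factor. Conversely, Theorem \ref{t:fullsum} applied to the Moore decomposition $(G + X_{\textnormal{Moore}}) + Z$ of $G + S^{-1}X$ gives $\supp_{\Gr}(Z) \subseteq \sum_{i=0}^{s} \langle G_{\textnormal{pub}}\rangle^{([i])} \subseteq V$, and every non-zero vector of $\supp_{\Gr}(Z) \subseteq \F_q^n$ has rank one, so the $\F_{q^m}$-span $\mathcal{U}$ of the rank-one vectors of $V$ satisfies $\supp_{\Gr}(Z) \subseteq \mathcal{U} \subseteq \supp_{\Gr}(X)$.

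To compute $\mathcal{U}$ explicitly, note that any rank-one vector of $V$ is an $\F_{q^m}^*$-scalar multiple of a vector in $V \cap \F_q^n$, so $\mathcal{U} = \langle V \cap \F_q^n \rangle$. By Lemma \ref{l:rankone}, $V \cap \F_q^n$ is the $\F_q$-solution space of the system $\sum_i a_i (V_i^{([1])} - V_i) = 0$, where $V_i$ are the rows of a reduced row echelon generator of $V$; this is solvable in polynomial time. Taking $U$ to be a full-rank basis of this solution space yields $U \in \F_q^{s' \times n}$ with $s \le s' \le t$ and $\langle U\rangle = \mathcal{U}$, and Theorem \ref{t:break} then recovers the plaintext in polynomial time.

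The main delicate point is the distance verification: the proof of Corollary \ref{c:rankone} uses the summation index $s$ only through the chain $d_{\min}^{\textnormal{R}}(\langle M\rangle) - (\text{index}) - t \ge 2$, so substituting the upper bound $t$ for $s$ costs only $t-s$ in slack, and the strict inequality $t < t'$ hard-wired into the GPT definition provides exactly the margin ($n-k \ge 2t+1$) needed to absorb it; without that strictness the bound would degenerate at the boundary. Everything else is routine linear algebra over $\F_q$ and $\F_{q^m}$.
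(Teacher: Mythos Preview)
Your argument is correct and follows essentially the same route as the paper: compute the Frobenius sum of $\langle G_{\textnormal{pub}}\rangle$, extract its rank-one elements via Lemma~\ref{l:rankone}, sandwich the resulting $\langle U\rangle$ between $\supp_{\Gr}(Z)$ and $\supp_{\Gr}(X)$, and invoke Theorem~\ref{t:break}. The only difference is that the paper sums up to the (unknown) index $s$ and handles this by iterating $s=1,\dots,t$ as remarked just before Theorem~\ref{t:break}, whereas you sum directly up to the public value $t$ and re-verify that the punctured-distance bound in the proof of Corollary~\ref{c:rankone} still gives $\ge 2$ under the GPT constraint $t<t'$; this is a legitimate and slightly cleaner variant.
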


\begin{proof}
	As before, let $S^{-1}X = X_{\textnormal{Moore}} + Z$ be a minimal column rank Moore decomposition. Denote by $s$ the column rank of $Z$. We first note that $d_{\textnormal{min}}^{\textnormal{R}}(\langle G\rangle) \geq s+t+2$, since
	$$\dfrac{d_{\textnormal{min}}^{\textnormal{R}}(\langle G\rangle) - 1}{2} \geq  \left\lfloor\dfrac{n-k}{2}\right\rfloor = t' > t \geq \dfrac{s+t}{2}.$$ 	
	By Corollary \ref{c:rankone}, all the elements of rank one in $\sum_{i=0}^{s}\langle G+X\rangle^{([i])}$ belong to the Grassmann support of $X$.
	With Lemma \ref{l:rankone} we can find a basis matrix $U\in \F_q^{s'\times n}$ for these elements of rank one in polynomial time. 
	We have $\langle U\rangle \subseteq \supp_\Gr(X)$.  
On the other hand, by Theorem \ref{t:fullsum}, $\supp_{\Gr}(Z) \subseteq \sum_{i=0}^{s}\langle G+X\rangle^{([i])}$. Thus, we also have $\supp_{\Gr}(Z) \subseteq \langle U\rangle$. 
Therefore we can use Theorem \ref{t:break} to recover the encrypted message.
\end{proof}

\section{Cryptanalysis of GGPT Variants}\label{s:ext}

In this section we adapt our attack to break the GGPT cryptosystem, as defined in Subsection \ref{ss:GPT}. To do so we will consider two variants of the GGPT separately. However, in both subsections we will consider a public key generator matrix of the form
$$\hat{G}_{\textnormal{pub}} := S[X \mid G]\sigma \in \F_{q^m}^{k\times (n+\hat t)},$$
where $G$ is a generator matrix of some Gabidulin code $\Gab_{n,k}(\boldsymbol{\alpha})$, $X \in \F_{q^m}^{k\times \hat t}$ is a matrix of column rank $\hat t$, $S \in \GL_k(\F_{q^m})$, and $\sigma \in \GL_{n+\hat t}(\F_q)$.

\subsection{Smart Approach Variant}\label{ss:SA}

Recall from Subection \ref{ss:Overbeck} that we can put the extended matrix into the form 
$$ G_{\textnormal{ext}}' = \tilde{S}'\left(\begin{array}{c|c} X^* & G^* \\ X^{**} & 0\end{array}\right) \sigma .$$
 Rashwan et al.\ in \cite{Rashwan10} proposed what they call the \textit{Smart Approach (SA)}. In this setting, they note that if $X \in \F_{q^m}^{k\times \hat t}$ is constructed from a Moore matrix of column rank $a$ and a non-Moore component of column rank $\hat t-a$, then $X^{**}$ will have rank $\hat t-a$. The paper gives no suggestions for design parameters of such a system. However, one implicit advantage of this construction is the ability to predict the rank of $X^{**}$, and therefore to be able to reduce the public key by choosing a smaller designed error matrix.

In the SA variant we can write $X = X_{\textnormal{Moore}} + Z$ as a minimal column rank Moore decomposition, where $X_{\textnormal{Moore}}$ has column rank $a$ and $Z$ has column rank $\hat t-a$. We can then rewrite 
\begin{equation}\label{eq:SAdecomp} 
\hat{G}_{\textnormal{pub}} =  \underbrace{S[X_{\textnormal{Moore}} \mid G]\sigma}_{M} +  \underbrace{S[Z\mid 0]\sigma}_{X'} .\end{equation} 
$X'$ is a matrix of column rank $\hat t-a$ and $S^{-1}M$ is a Moore matrix generating a code with minimum rank distance at least $n-k+1$.

\begin{theorem}\label{t:SA}
 Consider a GGPT cryptosystem as defined above. 
	Suppose an adversary can find a matrix $U' \in \F_q^{(\hat{t}-a)\times (\hat{t}+n)}$, such that  $\langle U'\rangle = \supp_{\Gr}([Z\mid 0]\sigma) = \supp_{\Gr}(X')$. Then an encrypted message from a public key of the form \eqref{eq:kpubmod} can be recovered in polynomial time.
\end{theorem}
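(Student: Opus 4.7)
The plan is to use $U'$ to construct a parity check $H^T$ that annihilates the non-Moore contribution $X'$, turning the right-restricted public matrix $\hat G_{\textnormal{pub}} H^T$ into the generator of a Gabidulin-type code, whose decoding can then be handled by the procedure of Subsection \ref{ss:decoding}.

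First I would compute in polynomial time a parity-check matrix $H \in \F_q^{(n+a)\times(n+\hat t)}$ of the $(\hat t - a)$-dimensional subspace $\langle U'\rangle$. Since every row of $X'$ lies in $\supp_{\Gr}(X') = \langle U'\rangle$, we can write $X' = V U'$ for some $V \in \F_{q^m}^{k\times(\hat t - a)}$, and hence $X' H^T = V U' H^T = 0$. Applying $H^T$ on the right of the decomposition \eqref{eq:SAdecomp} then yields
$$\hat G_{\textnormal{pub}} H^T = M H^T = S[X_{\textnormal{Moore}}\mid G]\sigma H^T.$$

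Next I would control the minimum rank distance of the reduced code. The matrix $[X_{\textnormal{Moore}}\mid G]\sigma$ is a Moore matrix whose generator has $\F_q$-rank at least $n$, since $\rk(\boldsymbol\alpha)=n$ and $\sigma$ is $\F_q$-invertible. Lemma \ref{l:moore}(1) then gives minimum rank distance at least $n-k+1$, and Lemma \ref{l:moore}(5), applied with the full-column-rank $\F_q$-matrix $H^T$ which drops $\hat t - a$ columns, shows that $[X_{\textnormal{Moore}}\mid G]\sigma H^T$ is again a Moore matrix whose row space has minimum rank distance at least $n - k + 1 - (\hat t - a)$. Left-multiplication by the invertible $S$ does not alter this row space, so the same bound applies to $\langle \hat G_{\textnormal{pub}} H^T\rangle$.

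Finally, for a ciphertext $\boldsymbol c = \boldsymbol m \hat G_{\textnormal{pub}} + \boldsymbol e$ with $\rk(\boldsymbol e)\leq t'$, I would compute $\boldsymbol c H^T = \boldsymbol m \hat G_{\textnormal{pub}} H^T + \boldsymbol e H^T$. Because $H$ is $\F_q$-valued, $\rk(\boldsymbol e H^T)\leq t'$, and provided $\hat t - a \leq n - k - 2t'$ (the slack under which the Smart Approach is set up, keeping the reduced code able to correct the restricted error), the decoding procedure of Subsection \ref{ss:decoding} applied to the Moore-based generator $\hat G_{\textnormal{pub}}H^T$ recovers $\boldsymbol m$ directly from $\boldsymbol c H^T$, with no need to reconstruct $S$ or $\sigma$ individually.

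The main obstacle I anticipate is the minimum-distance bookkeeping: the matrix $H$ is forced to be a parity check of the public subspace $\langle U'\rangle$ and cannot be tuned against the hidden Gabidulin structure, so one has no control over which columns of the Moore generator get collapsed. The rescue is Lemma \ref{l:moore}(5), which pins the distance loss to exactly the number of dropped columns $\hat t - a$, while the ample starting margin $n-k+1$ — comfortably large in the Smart Approach regime — leaves enough decoding power in the reduced code.
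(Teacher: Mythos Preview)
Your reduction via a generic parity check $H$ and Lemma~\ref{l:moore}(5) loses too much minimum distance for the decoding step to go through. The bound you obtain is $d_{\min}^{\textnormal{R}}(\langle \hat G_{\textnormal{pub}}H^T\rangle)\geq n-k+1-(\hat t-a)$, so the error-correction capability of the reduced code is only $\lfloor (n-k-(\hat t-a))/2\rfloor$. But in the GGPT system the injected error has rank up to $t'=\lfloor (n-k)/2\rfloor$, which is strictly larger as soon as $\hat t-a\geq 2$. The extra hypothesis you insert, $\hat t-a\leq n-k-2t'$, forces $\hat t-a\in\{0,1\}$; this is \emph{not} the Smart Approach regime (which only needs $\hat t-a<(n-k-1)/2$), so your argument does not cover the theorem as stated.

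The missing observation is that $H$ is not an arbitrary full-rank $\F_q$-matrix dropping $\hat t-a$ columns: since $\langle U'\rangle=\langle[U\mid 0]\sigma\rangle$ is supported on the first $\hat t$ coordinates after undoing $\sigma$, every parity check of $\langle U'\rangle$ has the block form
\[
H^T=\sigma^{-1}\left[\begin{array}{c|c} H_U^T & 0\\ \hline 0 & I_n\end{array}\right]A,\qquad A\in\GL_{n+a}(\F_q),
\]
with $H_U$ a parity check for $\langle U\rangle$. Hence $[X_{\textnormal{Moore}}\mid G]\sigma H^T=[X_{\textnormal{Moore}}H_U^T\mid G]A$: the Gabidulin block $G$ survives intact, and the resulting Moore code still has minimum rank distance at least $n-k+1$. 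Picking $n$ independent columns then yields a genuine $\Gab_{n,k}$ code that corrects $t'$ errors, which is exactly what is needed. This structural use of $U'$---not a generic column-drop bound---is the key step.
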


\begin{proof}
We note that $U'$ must be of the form $U' = [U \mid 0]\sigma$, where $U\in \F_q^{(\hat{t}-a)\times \hat{t}}$ is such that $\supp_{\Gr}(Z) = \langle U\rangle$. Let $H_U \in \F_q^{a\times \hat{t}}$ be a parity check matrix for $U$. A parity check matrix $H_{U'}\in \F_q^{(a+n)\times (\hat{t}+n)}$ for $U'$ must be of the form 
$$(H_{U'})^T = \sigma^{-1}\left[\begin{array}{c|c} H_U^T & 0_{\hat{t}\times n} \\ \hline 0_{n\times a} & I_n\end{array}\right]A,$$ for some $A \in \GL_{n+a}(\F_q)$.

We compute
\begin{align*} \hat{G}_{\textnormal{pub}}(H_{U'})^T &= S[X_{\textnormal{Moore}} \mid G]\sigma (H_{U'})^T +  S[Z\mid 0]\sigma (H_{U'})^T \\ &= S[X_{\textnormal{Moore}}\mid G]\left[\begin{array}{c|c} H_U^T & 0_{\hat{t}\times n} \\ \hline 0_{n\times a} & I_n\end{array}\right]A \\ &= S[X_{\textnormal{Moore}}H_U^T \mid G]A.\end{align*} 
 $[X_{\textnormal{Moore}}H_U^T \mid G]A$ is again a Moore matrix, generating a code of minimum distance at least $n-k+1$. Hence, we can find $n$ independent columns of $\hat{G}_{\textnormal{pub}}(H_{U'})^T$ which will form a Gabidulin code of minimum distance $n-k+1$. Denote these columns by $\boldsymbol{i} = (i_1, \ldots i_n)$ and the corresponding submatrix by $G_{\boldsymbol{i}}$. From Section \ref{ss:decoding}, we can recover a decoding algorithm for $\langle G_{\boldsymbol{i}}\rangle$ with respect to $G_{\boldsymbol{i}}$. 
 
 We note that if $\boldsymbol{e}$ is an error of rank at most $t'$, and we denote by $\boldsymbol{e'}$ the subvector of $\boldsymbol{e}(H_{U'})^T$ corresponding to columns $\boldsymbol{i}$, then $$\rk(\boldsymbol{e'}) \leq \rk(\boldsymbol{e}(H_{U'})^T) \leq \rk(\boldsymbol{e}) \leq t'.$$ 
 If we apply $H_{U'}$ to an encrypted message of the form $ \boldsymbol{m}\hat{G}_{\textnormal{pub}} + \boldsymbol{e},$ we obtain 
 $$ \boldsymbol{m}\hat{G}_{\textnormal{pub}}(H_{U'})^T + \boldsymbol{e}(H_{U'})^T.$$ 
 Restricting to the coordinates $\boldsymbol{i}$, we obtain $$\boldsymbol{m}G_{\boldsymbol{i}} + \boldsymbol{e}',$$  which we can decode in the code $\langle G_{\boldsymbol{i}}\rangle $ to recover $\boldsymbol{m}$, since the error correction capability of $\langle G_{\boldsymbol{i}}\rangle $ is $t' \geq \rk(\boldsymbol{e'})$.
\end{proof}

\begin{corollary}
 Consider a GGPT cryptosystem as defined above. If $$\hat{t}-a <  \dfrac{n-k-1}{2}$$
 we can recover an encrypted message in polynomial time.
 \end{corollary}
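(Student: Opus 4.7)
The plan is to reduce the claim to Theorem~\ref{t:SA}: it suffices to exhibit a polynomial-time procedure by which the adversary can compute a matrix $U'$ with $\langle U'\rangle = \supp_{\Gr}(X')$, where $X' = S[Z\mid 0]\sigma$. Writing $S^{-1}\hat G_{\textnormal{pub}} = M' + Z'$ with $M' := [X_{\textnormal{Moore}}\mid G]\sigma$ and $Z' := [Z\mid 0]\sigma$, and observing that the row span of $\hat G_{\textnormal{pub}}^{([i])}$ equals that of $(S^{-1}\hat G_{\textnormal{pub}})^{([i])}$ because $S^{([i])}$ is invertible, the sums of Frobenius shifts computed from $\hat G_{\textnormal{pub}}$ agree with those computed from $M'+Z'$. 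So the adversary never needs to know $S$ to carry out the analysis below, which is done on the convenient decomposition $M'+Z'$.

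Next I would verify the hypotheses of Corollary~\ref{c:rankone} with $M=M'$ and $X=Z'$. The concatenation $[X_{\textnormal{Moore}}\mid G]$ is a Moore matrix whose generating vector has rank at least $n$, since the $G$-block alone contributes an $n$-dimensional $\F_q$-span of coordinates; right-multiplication by $\sigma\in \GL_{n+\hat t}(\F_q)$ preserves both the Moore structure and the rank of the generating vector, so part~1 of Lemma~\ref{l:moore} gives $d_{\textnormal{min}}^{\textnormal{R}}(\langle M'\rangle)\geq n-k+1$. Moreover, $Z' = 0 + Z'$ is a minimum column-rank Moore decomposition of $Z'$: if $Z'=A+B$ were a Moore decomposition with $\crk(B)<\hat t-a$, then unwinding by $\sigma^{-1}$ and restricting to the first $\hat t$ columns would yield a Moore decomposition $Z=A_1+B_1$ with $\crk(B_1)<\hat t-a$, in turn giving $X=(X_{\textnormal{Moore}}+A_1)+B_1$, contradicting the minimality assumed in the SA setup. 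Hence one may take $s=t=\hat t-a$ in the corollary, and the hypothesis $\hat t-a<(n-k-1)/2$ rearranges (using that the quantities are integers) to $2(\hat t-a)+2\leq n-k+1$, meeting the required bound $d_{\textnormal{min}}^{\textnormal{R}}(\langle M'\rangle)\geq s+t+2$.

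Corollary~\ref{c:rankone} in the case $s=t$ then gives that the rank-one elements of $\sum_{i=0}^{\hat t-a}\langle\hat G_{\textnormal{pub}}\rangle^{([i])}$ exactly span $\supp_{\Gr}(Z')=\supp_{\Gr}(X')$. The adversary stacks the Frobenius powers of $\hat G_{\textnormal{pub}}$, reduces to row echelon form, extracts a basis $U'$ of the $\F_q$-codewords by solving the linear system of Lemma~\ref{l:rankone}, and then applies Theorem~\ref{t:SA} to recover the message. Since $\hat t-a$ is not public, the attacker tries $s=1,2,\dots,\hat t$ and keeps the first value for which the recovered $U'$ yields a successful decoding, adding only a factor of $\hat t$ to the runtime. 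The only non-routine step is the minimality argument for the Moore decomposition of $Z'$; everything else is linear algebra over $\F_{q^m}$ or $\F_q$, directly mirroring the structure of the GPT attack in Corollary~\ref{c:break}.
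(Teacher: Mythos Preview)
Your proof is correct and follows essentially the same route as the paper: both multiply through by $S^{-1}$ to get a genuine Moore-plus-perturbation decomposition, verify the minimum-distance hypothesis of Corollary~\ref{c:rankone} via $n-k+1\geq 2(\hat t-a)+2$, extract the rank-one elements with Lemma~\ref{l:rankone}, and finish with Theorem~\ref{t:SA}. You are more explicit than the paper on two points it glosses over---the minimality of the Moore decomposition $0+Z'$ of $[Z\mid 0]\sigma$ (the paper simply asserts it with ``Recall that\dots''), and the outer loop over candidate values of $\hat t-a$---but the underlying argument is the same.
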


\begin{proof}
	Recall that $$\hat{G}_{\textnormal{pub}} = \underbrace{S[X_{\textnormal{Moore}} \mid G]\sigma}_{M} +  \underbrace{S[Z\mid 0]\sigma}_{X'}$$
	is a minimum column rank Moore decomposition. Then $S^{-1}\hat{G}_{\textnormal{pub}}  = S^{-1}M + S^{-1}X'$ is also a minimal column rank Moore decomposition. $S^{-1}M$ is a Moore matrix generating a code of minimum rank distance at least $n-k+1$. Since, by the condition of this corollary,
	$$n-k+1 \geq 2(\hat{t}-a)+2,$$ 
	it follows from Corollary \ref{c:rankone} that all elements of rank one in 
$$\sum_{i=0}^{\hat{t}-a} \langle \hat{G}_{\textnormal{pub}} \rangle^{([i])}   =\sum_{i=0}^{\hat{t}-a} \langle S^{-1} \hat{G}_{\textnormal{pub}} \rangle^{([i])}   =\sum_{i=0}^{\hat{t}-a}\langle S^{-1}M+S^{-1}X'\rangle^{([i])}$$ span the space $\supp_{\Gr}(X') = \supp_{\Gr}([Z\mid 0]\sigma)$. 
We can use Lemma \ref{l:rankone} to find these elements of rank one, and obtain $U'\in \F_q^{(\hat{t}-a)\times (\hat{t}+n)}$ such that $\langle U'\rangle = \supp_{\Gr}(X')$. Then we can use Theorem \ref{t:SA} to recover the message. 
\end{proof}

The following example illustrates a case when Overbeck's attack fails, but our attack recovers the encrypted message.

\begin{example}
	Let $q=2$, $n = 8$, $k=3$, $\hat t = 3$, $a = 1$ and $g_1,\dots,g_8\in \F_{2^8}$ linearly independent over $\F_2$. Consider the generator matrix of a Gabidulin code
$$G = \left(\begin{array}{cccc} g_1 & g_2 & \ldots & g_8 \\ g_1^{([1])} & g_2^{([1])} & \ldots & g_8^{([1])} \\ g_1^{([2])} & g_2^{([2])} & \ldots & g_8^{([2])} \end{array}\right),$$
and, for some $x\in \F_{2^8}\backslash \F_2$, the matrices
$$X_{\textnormal{Moore}} = \left(\begin{array}{ccc} x & 0 & 0 \\ x^{([1])} & 0 & 0 \\ x^{([2])} & 0 & 0 \end{array}\right), \quad Z=\left(\begin{array}{ccc} 0 & 1 & 1 \\ 1 & 0 & 1 \\ 1 & 1 & 0 \end{array}\right).$$
Let
$$X = X_{\textnormal{Moore}} + Z$$
and the public key generator matrix be
$$ \hat{G}_{\textnormal{pub}} = [X \mid G] =  \left(\begin{array}{ccc|cccc} x&1&1&g_1 & g_2 & \ldots & g_8 \\ x^{([1])}+1&0&1& g_1^{([1])} & g_2^{([1])} & \ldots & g_8^{([1])} \\ x^{([2])}+1& 1& 0 & g_1^{([2])} & g_2^{([2])} & \ldots & g_8^{([2])} \end{array}\right) .$$
For simplicity we let $S=I_3$ and $\sigma = I_{11}$. We choose $u=1$ and construct $G_{\textnormal{ext}}$, which can be put in the form 
$$G'_{\textnormal{ext}} = \left(\begin{array}{ccc|cccc} x & 0 & 0 & g_1 & g_2 & \ldots & g_8 \\ x^{([1])} & 0 & 0 & g_1^{([1])} & g_2^{([1])} & \ldots & g_8^{([1])}\\ x^{([2])} & 0 & 0 & g_1^{([2])} & g_2^{([2])} & \ldots & g_8^{([2])} \\ x^{([3])} & 0 & 0& g_1^{([3])} & g_2^{([3])} & \ldots & g_8^{([3])} \\ 1 & 0 & 1 & 0& \ldots & 0 & 0 \\ 0 & 1 & 1 & 0 &\ldots  & 0 & 0\end{array}\right)=\left(\begin{array}{c|c} X^* & G^* \\ X^{**} & 0\end{array}\right)$$
by a suitable row transformation. 
Here Overbeck's attack fails, because $X^{**}$ does not have full rank.
On the other hand, our attack succeeds, since we can directly recover the elements of rank one as $\langle [X^{**}\mid 0]\rangle=\langle [Z\mid 0] \rangle$. Thus we can use Theorem \ref{t:SA} and recover any encrypted message.
\end{example}

\subsection{Loidreau's GGPT Variant}\label{ss:LGGPT}

As already mentioned in Subsection \ref{ss:Overbeck}, in Loidreau's GGPT  variant \cite{Loidreau10} the designed error matrix, $X \in \F_{q^m}^{k\times \hat t}$, is a randomly chosen matrix of rank $a< \hat{t}/(n-k)$. We can assume that $X$ has column rank $\hat t$. 
In this case, to find the Grassmann support of $X$ with the help of Theorem \ref{t:frobchain}, we need to go up to the $(\hat t-1)$-st Frobenius power. But, since 
$$\hat t-1> a(n-k)-1 \geq n-k-1,$$ 
we get that 
$$\sum_{i=0}^{\hat t-1} G^{([i])} = \sum_{i=0}^{n-k-1} G^{([i])} =\F_{q^m}^n , $$
hence the elements of rank one cannot help us reconstruct the Grassmann support of $X$. Thus the attack of Subsection \ref{ss:SA} would not succeed. 

However, in this case, we can still use the idea of locating the elements of rank one; but now we want to recover the elements of the Gabidulin part of the code, instead of the Grassmann support of $X$. The strategy is effectively the same, although we must make some assumptions on the behavior of $X$ and random subcodes of Gabidulin codes. 


First, we note that there is a suitable row transformation, $T$, so that \begin{equation}\label{eq:form}T\hat{G}_{\textnormal{pub}} = \left(\begin{array}{c|c} X^* & G^* \\ 0 & G^{**} \end{array}\right)\sigma,\end{equation} where $X^* \in \F_{q^m}^{a\times \hat{t}}$ is a matrix with the same row span as $X$, and $G^*$ and $G^{**}$ are matrices which span subcodes of $\langle G\rangle$. One can easily see that $\langle [X^*\mid G^*]\sigma\rangle $ and $\langle [0\mid G^{**}]\sigma\rangle $ intersect trivially. Our strategy will be to use the purely Gabidulin part $[0\mid G^{**}]$ to generate a parity check matrix for $[X^* \mid 0]$. 

We will now state the assumptions that we will use in our attack. These assumptions are justified with experimental results in Table \ref{table:assumptions}.

\begin{assumption}\label{ass:1}
	Let $G \in \F_{q^m}^{k\times n}$ be a generator matrix of a Gabidulin code, and $\mathcal{B}\subset\langle G\rangle$ be a random subspace of $\langle G\rangle$ of codimension $a$. Set \begin{equation}\label{eq:ell} \ell = \left\lceil \dfrac{n}{k-a}\right\rceil.\end{equation} With high probability, we have \begin{equation}\label{eq:wholespace}\sum_{i=0}^{\ell-1}\mathcal{B}^{([i(k-a)])} = \F_{q^m}^n.\end{equation}
\end{assumption}

The value of $\ell$ in \eqref{eq:ell} is the smallest possible value for which we can obtain equality in \eqref{eq:wholespace}. One could choose $\ell$ larger than in \eqref{eq:ell}, which we will remark on in the end of this section. 

Since we can expect a random matrix whose rank is small relative to the dimension of the ambient space to not contain elements of rank one, we make the additional assumption:
\begin{assumption}\label{ass:2}
	Let $X \in \F_{q^m}^{k\times \hat{t}}$ be a random matrix of rank $a$. For $\ell$ given in \eqref{eq:ell}, if $\ell a \ll \hat{t}$, then with high probability, $$\sum_{i=0}^{\ell-1}\langle X\rangle^{([i(k-a)])}$$ contains no elements of rank one.
\end{assumption}

\begin{table}[t]
\centering
\begin{tabular}{|ccccc|c|c|} \hline $m$ & $n$ & $k$ & $a$ & $\hat{t}$ & $\textnormal{Assumption } \ref{ass:1} $ & $\textnormal{Assumption } \ref{ass:2}$ \\ \hline $24$ & $24$ & $12$ & $3$ & $40$ & $\sim 1$ & $\sim 1$ \\ $24$ & $24$ & $12$ & $4$ & $52$ & $\sim .998$ & $\sim 1$ \\\hline \end{tabular}
\caption{Experimental results for Assumptions 1 and 2: Probabilities of success in $1000$ trials for $q = 2$.}
\label{table:assumptions} 
\end{table}

\begin{theorem}
Let $S \in \GL_k(\F_{q^m})$, $\sigma\in\GL_{n+\hat{t}}(\F_q)$, $G \in \F_{q^m}^{k\times n}$, and $X \in \F_{q^m}^{k\times \hat{t}}$ be of rank $a$, and consider Loidreau's GGPT variant with public key 
$$\hat{G}_{\textnormal{pub}} = S[X\mid G]\sigma.$$ 
If Assumptions \ref{ass:1} and \ref{ass:2} are true, then we can break the Loidreau GGPT variant in polynomial time with high probability.
\end{theorem}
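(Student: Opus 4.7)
The plan is to follow the same template as the SA case (Theorem \ref{t:SA}): use the Frobenius-sum/rank-one technique from Section \ref{s:presults} to extract a specific $\F_q$-subspace of $\F_q^{n+\hat t}$ from the public key, and then leverage that subspace to reduce Loidreau's variant to a GPT that the attack of Section \ref{s:attack} can crack. Concretely, set $\mathcal{V} = \sum_{i=0}^{\ell-1}\langle \hat{G}_{\textnormal{pub}}\rangle^{([i(k-a)])}$ with $\ell = \lceil n/(k-a)\rceil$. The decomposition \eqref{eq:form} gives $\langle \hat{G}_{\textnormal{pub}}\rangle \supseteq \langle [0\mid G^{**}]\sigma\rangle$, and since $\sigma \in \GL_{n+\hat t}(\F_q)$ the coordinate-wise Frobenius commutes with right-multiplication by $\sigma$; hence $\mathcal{V} \supseteq [0 \mid \sum_{i=0}^{\ell-1} \mathcal{B}^{([i(k-a)])}]\sigma$ for $\mathcal{B} = \langle G^{**}\rangle$, a random codimension-$a$ subspace of $\langle G\rangle$. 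Applying Assumption \ref{ass:1} to $\mathcal{B}$ shows that this Frobenius sum equals $\F_{q^m}^n$ with high probability, so $\mathcal{V}$ contains the $n$-dimensional $\F_{q^m}$-subspace $[0 \mid \F_{q^m}^n]\sigma$.

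Symmetrically, unfolding $\mathcal{V}\sigma^{-1} = \sum_{i=0}^{\ell-1}\langle [X^{([i(k-a)])} \mid G^{([i(k-a)])}]\rangle$ shows that its projection onto the first $\hat t$ coordinates is $\mathcal{W} := \sum_{i=0}^{\ell-1}\langle X\rangle^{([i(k-a)])}$; combined with the inclusion $[0\mid \F_{q^m}^n] \subseteq \mathcal{V}\sigma^{-1}$ this forces $\mathcal{V}\sigma^{-1} = \{[a \mid b] : a \in \mathcal{W},\, b \in \F_{q^m}^n\}$. Any rank-one vector of $\mathcal{V}$ pulls back via $\sigma^{-1}$ to some $[a \mid b]$ in which $a \in \mathcal{W}$ has rank at most $1$, and Assumption \ref{ass:2} forces $a = 0$. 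Hence every rank-one vector of $\mathcal{V}$ lies in $[0 \mid \F_{q^m}^n]\sigma$; putting $\mathcal{V}$ in reduced row echelon form and invoking Lemma \ref{l:rankone} then efficiently produces an $\F_q$-basis $U' \in \F_q^{n\times(n+\hat t)}$ of $\mathcal{L}' := \{[0\mid w]\sigma : w \in \F_q^n\}$.

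With $\mathcal{L}'$ in hand I would decrypt as follows. Extend $U'$ to some $\sigma' \in \GL_{n+\hat t}(\F_q)$ whose bottom $n$ rows equal $U'$; since the bottom $n$ rows of $\sigma$ and $\sigma'$ span the same $\F_q$-subspace, $\sigma(\sigma')^{-1}$ has a zero lower-left block and $\hat{G}_{\textnormal{pub}}(\sigma')^{-1} = S[X\tau_1 \mid GE + X\tau_2]$ for some $\tau_1 \in \GL_{\hat t}(\F_q)$, $\tau_2 \in \F_q^{\hat t\times n}$, $E \in \GL_n(\F_q)$. The last $n$ columns $SGE + SX\tau_2$ form a GPT public key whose designed-error matrix has column rank at most $a$, so Corollary \ref{c:break} recovers a generating vector of $\langle G\rangle$ (up to a puncturing) in polynomial time. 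Knowing $\langle G\rangle$, solve the $\F_q$-linear system ``rows of $\hat{G}_{\textnormal{pub}} B$ lie in $\langle G\rangle$'' for $B \in \F_q^{(n+\hat t)\times n}$ of full column rank; the last $n$ columns of $\sigma^{-1}$ form one solution, so non-trivial solutions exist. For any such $B$ the matrix $\hat{G}_{\textnormal{pub}} B$ generates a scrambled Gabidulin code, and because $B$ is over $\F_q$ we have $\rk(\boldsymbol{e}B) \leq \rk(\boldsymbol{e}) \leq t'$; a standard Gabidulin decoder applied to $\boldsymbol{c}B$ followed by inversion of $\hat{G}_{\textnormal{pub}} B$ recovers $\boldsymbol{m}$.

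The main obstacle I anticipate is the third paragraph. One must verify that the reduced GPT genuinely satisfies the hypothesis of Corollary \ref{c:break} (essentially $a \leq \lfloor(n-k)/2\rfloor$, which needs to be checked against Loidreau's parameter regime $a < \hat t/(n-k)$), and that the final $\F_q$-linear system for $B$ has the expected $n^2$-dimensional solution space so that a generic $B$ of full column rank can be located in polynomial time. Both facts reduce to dimension counts but deserve careful bookkeeping.
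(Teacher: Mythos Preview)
Your first two paragraphs are correct and match the paper: form $\mathcal{V}=\sum_{i=0}^{\ell-1}\langle \hat G_{\textnormal{pub}}\rangle^{([i(k-a)])}$, use Assumption~\ref{ass:1} to show $[0\mid \F_{q^m}^n]\sigma\subseteq\mathcal{V}$, use Assumption~\ref{ass:2} to show every rank-one vector of $\mathcal{V}$ lies in $[0\mid \F_{q^m}^n]\sigma$, and apply Lemma~\ref{l:rankone} to extract an $\F_q$-basis $U'$ of $\{[0\mid w]\sigma:w\in\F_q^n\}$.

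The gap is in your third paragraph. You assert that the last $n$ columns of $\hat G_{\textnormal{pub}}(\sigma')^{-1}$, namely $S(GE+X\tau_2)$, form a GPT public key ``whose designed-error matrix has column rank at most $a$''. This is false: $X$ has $\F_{q^m}$-rank $a$, so $X\tau_2$ has $\F_{q^m}$-rank at most $a$, but its \emph{column rank} (the $\F_q$-dimension of the column span, which is what the GPT attack needs to be small) is only bounded by $\min(n,\hat t,am)$. For Loidreau's suggested parameters $m=n=24$, $k=12$, $a=3$, $\hat t=40$ this bound is $24$, far above $t'=6$, so Corollary~\ref{c:break} does not apply. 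Your subsequent steps (recovering $\langle G\rangle$ and solving for $B$) all hinge on this failed reduction, so the argument does not close.

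The paper avoids this trap entirely. After obtaining $U'$ (the paper calls it $U$), it takes a parity check matrix $H_U$ and observes that $\hat G_{\textnormal{pub}}H_U^T = SXA$ for some $A\in\GL_{\hat t}(\F_q)$; this isolates the $X$-part. It then solves the $\F_q$-linear system $(\hat G_{\textnormal{pub}} - \hat G_{\textnormal{pub}}H_U^T V)H_U^T=0$ for $V\in\F_q^{\hat t\times(n+\hat t)}$, which has the solution $V=[A^{-1}\mid 0]\sigma$ giving $\hat G_{\textnormal{pub}}H_U^T V = S[X\mid 0]\sigma$. A parity check $H_V$ of $V$ then satisfies $\hat G_{\textnormal{pub}}H_V^T = SGB$ for some $B\in\GL_n(\F_q)$, a pure scrambled Gabidulin code, and one decodes directly. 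No appeal to the GPT attack is needed; the whole second phase is linear algebra over $\F_q$.
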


\begin{proof}
	Let $\ell$ be as in $\eqref{eq:ell}$ and $T\hat G_{\textnormal{pub}}$ as in $\eqref{eq:form}$. Consider the  matrix 
\begin{equation*}\label{eq:pubmatrixlggpt} G''_{\textnormal{ext}} :=\left(\begin{array}{c} \hat{G}_{\textnormal{pub}} \\ \hat{G}_{\textnormal{pub}}^{([k-a])} \\ \vdots \\ \hat{G}_{\textnormal{pub}}^{([(k-a)(\ell-1)])}\end{array}\right) = \tilde{S} \underbrace{\left(\begin{array}{c} (X^*\mid G^*) \\ (X^*\mid G^*)^{([k-a])} \\ \vdots \\ (X^*\mid G^*)^{([(k-a)(\ell-1)])} \\ \hline (0\mid G^{**}) \\ (0\mid G^{**})^{([k-a])} \\ \vdots \\ (0\mid G^{**})^{([(k-a)(\ell-1)])}\end{array}\right)}_{\bar{G}}\sigma.\end{equation*} 
Since $\langle G^{**}\rangle$ is a subcode of $\langle G\rangle$ of codimension $a$, by Assumption \ref{ass:1}, we have with high probability,
$$\sum_{i=0}^{\ell-1}\langle G^{**}\rangle^{([(k-a)i])} = \F_{q^m}^n.$$ 
Then, the bottom submatrix of $\bar{G}$ has the same row span as $[0 \mid I_n]$, and hence, by using elementary operations, we can eliminate the second component of every row in the top submatrix of $\bar{G}$. Then, the space generated by the rows of $G''_{\textnormal{ext}}$ is the same as that generated by 
$$G'''_{\textnormal{ext}} = \left(\begin{array}{c|c} X^* & 0 \\ (X^*)^{([k-a])} & 0 \\ \vdots & \vdots \\(X^*)^{([(k-a)(\ell-1)])} & 0 \\ 0 & I_n \end{array}\right)\sigma.$$ 
By Assumption \ref{ass:2}, with high probability we have that $$\sum_{i=0}^{\ell-1}\langle X^*\rangle^{([(k-a)i])}$$ contains no elements of rank one, and therefore all elements of rank one in $\langle G''_{\textnormal{ext}}\rangle=\langle G'''_{\textnormal{ext}}\rangle$ must belong to $\langle [0\mid I_n]\rangle\sigma$. With the help of Lemma \ref{l:rankone} we can recover a matrix $U \in \F_q^{n\times (\hat{t}+n)}$ which is a basis for $\langle [0\mid I_n]\rangle\sigma$. Then, any parity check matrix $H_U \in \F_q^{\hat{t}\times (n+\hat{t})}$ for $\langle [0\mid I_n]\rangle\sigma$ must have the form $$H_U^T = \sigma^{-1}\left[\begin{array}{c} A \\ 0\end{array}\right] \in \F_q^{(n+\hat{t})\times \hat{t}},$$ where $A \in \GL_{\hat{t}}(\F_q)$. It follows that, if we compute 
$$\hat{G}_{\textnormal{pub}}H_U^T = S[X \mid G] \sigma H_U^T =SXA \in \F_{q^m}^{k\times \hat{t}},$$ 
then there exists a unique matrix $V = [A^{-1} \mid 0]\sigma \in \F_q^{\hat{t}\times(n+\hat{t})}$ such that $$\hat{G}_{\textnormal{pub}}H_U^T V = SXAV = S[X\mid 0]\sigma.$$ We can find the matrix $V$ by observing that 
\begin{equation}\label{eq:LGGPTeq} (\hat{G}_{\textnormal{pub}} - \hat{G}_{\textnormal{pub}}H_U^T V)H_U^T = S[0\mid G]\sigma H_U^T = 0.\end{equation} 
This gives a linear system of equations with $\hat{t}(n+\hat{t})$ variables and $k\times \hat{t}$ equations over $\F_{q^m}$. Since the variables can only take values in $\F_q$, we can expand each equation into $m$ equations over $\F_q$, obtaining a system of $km\hat{t}$ equations and $\hat{t}(n+\hat{t})$ variables over $\F_q$. Hence, we can solve this system of equations if $km \geq n+\hat{t}$ (which is always satisfied).

Let $H_V \in \F_{q^m}^{n\times (n+\hat{t})}$ be any dual matrix for $V$. Then, $H_V$ has the form $$H_V^T = \sigma^{-1}\left[\begin{array}{c} 0 \\ B \end{array}\right],$$ 
for some $B \in \GL_n(\F_q)$. Therefore, 
$$\hat{G}_{\textnormal{pub}}H_V^T = S[X\mid G]\sigma H_V^T = SGB \in \F_{q^m}^{k\times n}$$
 is a Gabidulin code of minimum distance $n-k+1$, from which we can recover a decoding algorithm, as explained in Subsection \ref{ss:decoding}. If we receive an encrypted message of the form 
 $$\boldsymbol{m}\hat{G}_{\textnormal{pub}} + \boldsymbol{e},$$ we can apply $H_V$, obtaining 
 $$\boldsymbol{m}SGB + \boldsymbol{e}H_V^T.$$ Since  
 $$\rk(\boldsymbol{e}H_V^T) \leq \wt_{\textnormal{R}}(\boldsymbol{e}) \leq t',$$ we can recover the encrypted message, $\boldsymbol{m}$, from the recovered decoding algorithm with respect to $SGB$. All the operations required for this attack can be performed in polynomial time.
\end{proof}

We will conclude this section with an example where we analyze our attack against the parameters proposed by Loidreau in \cite{Loidreau10} in order to resist Overbeck's attack. It turns out that the proposed parameters are not secure against our attack. 

\begin{example}
Consider a Loidreau GGPT variant with $q = 2$, $m=n=24$, $k=12$, $a=3$, and $\hat{t} = 40$, i.e.\ the first set of parameters from Table \ref{table:assumptions}. 
Assume we, as an attacker, know the public generator matrix $\hat{G}_{\textnormal{pub}}\in \F_{2^{24}}^{12\times 64}$ and received an encrypted message $\boldsymbol{y}$.
We compute $\ell = \lceil \frac{24}{12-3}\rceil = 3$ and proceed as follows:
\begin{enumerate}
	\item We compute $\hat{G}_{\textnormal{pub}}^{([9])}, \hat{G}_{\textnormal{pub}}^{([18])}$ to obtain the extended matrix $G_{\textnormal{ext}}'' \in \F_{2^{24}}^{36\times 64}$. This requires at most $1536 = 2\cdot 12\cdot 64$ Frobenius powers in $\F_{2^{24}}$. Using a normal basis to represent $\F_{2^{24}}$ over $\F_2$, this can be done very efficiently. 
	\item\label{step3} We find the elements of rank one in $\langle G_{\textnormal{ext}}''\rangle$, as described in Lemma \ref{l:rankone}. To do so we need to row reduce $G_{\textnormal{ext}}''$ and then solve a linear system over $\F_2$ with  
$36$ unknowns and $24\cdot 64 = 1536$ equations. Then, if Assumptions 1 and 2 hold, we find some basis matrix $U \in \F_2^{24\times 64}$, such that $\langle U\rangle$ contains all these elements of rank one.
	\item Compute a parity check matrix $H_U$ for $U$.
	\item\label{step4} We find a matrix $V\in \F_2^{40 \times 64}$, solving Equation \eqref{eq:LGGPTeq}.
	\item We compute a parity check matrix $H_V \in \F_2^{64\times 24}$  for $V$, and compute the product $\hat{G}_{\textnormal{pub}}H_V^T$.
	\item We recover a decoding algorithm for the code $\langle \hat{G}_{\textnormal{pub}}H_V^T\rangle $, as described in Lemma \ref{l:rankone}, and decode $\boldsymbol{y}H_V^T$ with this algorithm.
\end{enumerate}

We observe that step \ref{step4} above is the most computationally intensive, and therefore we estimate the complexity of our attack based on this step. This is done by solving a $(40\cdot 64)\times (24\cdot 12\cdot 40)$ system over $\F_2$ by Gaussian elimination on the resulting matrix. This requires on the order of $2^{39}$ operations over $\F_2$. Implementing the algorithm on a personal computer, we were able to break this system very efficiently.

For the parameters in the second row of the table, we can similarly break the system, albeit with slightly higher complexity due the larger parameters.
\end{example}

\section{Conclusion}

In this paper, we provide a new attack against cryptosystems based on Gabidulin codes, reconfirming that Gabidulin based cryptosystems are vulnerable from a structural perspective. Our attack generalizes Overbeck's attack, focusing instead on recovering the elements of rank one, rather than the structure of the dual space. 
One principle advantage of our attack is that it can be extended to cryptanalyze certain variants of the generalized GPT system, which resist the original attacks of Gibson and Overbeck. In particular, we show that the Smart Approach and Loidreau's GGPT variants are vulnerable to this attack. 

To the best of the authors' knowledge, attacking a cryptosystem by looking at the elements of rank one is a new approach which may need to be considered for the security of existing and future rank-metric based cryptosystems. As a next step the authors want to use this idea of finding elements of rank one to cryptanalyze the column scrambler variant of \cite{Gabidulin09}, which has so far resisted structural attacks.

\bibliography{biblio}{}
\bibliographystyle{plain}

\end{document}